\documentclass[11pt,letterpaper]{article}

\usepackage{latexsym}
\usepackage{epsfig}                     %Figures, Color
\usepackage[usenames,dvipsnames]{color}
\usepackage{graphicx}
\graphicspath{{./images/}}
\usepackage{amssymb, amsmath, amsthm}   %Theorems
\usepackage{enumerate}                  %Enumeration Style
\usepackage{mathrsfs}                   %Math Fonts
\usepackage{multirow}
\usepackage{algorithm}
\usepackage{algorithmic}
\usepackage{longtable}
\usepackage{setspace}
%MTH did not have this \usepackage{skull}
\usepackage{placeins}
\usepackage{ wasysym }
\usepackage{caption}
\usepackage{geometry}                                                                          %fuer Layout
\usepackage[dvipsnames]{xcolor} % more colors		

\usepackage[T1]{fontenc}
\usepackage{lmodern}

\usepackage{xspace}

\geometry{top=35mm, bottom=30mm, left=30mm, right=30mm}

\definecolor{purpcol}{RGB}{195,9,235}
\definecolor{citescol}{RGB}{194,101,1}
\definecolor{urlscol}{RGB}{0,150,206}
\definecolor{linkscol}{RGB}{149,0,207}
\definecolor{mycol}{RGB}{25,23,191}
\definecolor{outputcol}{RGB}{34,139,34}
\definecolor{tcol}{RGB}{165,0,14}

\usepackage[breaklinks]{hyperref}
\usepackage[all]{hypcap}
\hypersetup{colorlinks=true,linkcolor=linkscol,citecolor=citescol,urlcolor=urlscol}

\usepackage[normalem]{ulem} % a package which allows crossing out

\newtheorem{thm}{Theorem}%[section]
\newtheorem{lem}[thm]{Lemma}

\newtheorem{cor}[thm]{Corollary}

\theoremstyle{definition}

\theoremstyle{definition}
\newtheorem{rem}[thm]{Remark}
\theoremstyle{definition}

\newcommand{\SB}{{\bar{s}}}

\newcommand{\citep}{\cite}
\newcommand{\citet}{\cite}

\newcommand{\tsrev}[1]{{\textcolor{black}{#1}}} 

\setcounter{secnumdepth}{0}

\begin{document}
\title{The fossilized birth-death model for the analysis of stratigraphic range data under different speciation modes} 
\author{Tanja Stadler$^{1,2}$, Alexandra Gavryushkina$^{1,2}$, Rachel C.\ M.\ Warnock$^{1,2}$,\\ Alexei J.\ Drummond$^3$, Tracy A.\ Heath$^4$}
\date{%
{\it $^1$Department of Biosystems Science \& Engineering, Eidgen\"{o}ssische Technische Hochschule Z\"{u}rich, 4058 Basel, Switzerland;\\
$^2$Swiss Institute of Bioinformatics (SIB), Switzerland.\\
$^3$Centre for Computational Evolution, Department of Computer Science, University of Auckland, Auckland, 1010, New Zealand;\\
$^4$Department of Ecology, Evolution, \& Organismal Biology, Iowa State University, Ames, Iowa, 50011, USA\\ 
%$^*$ Correspondance: tanja.stadler@bsse.ethz.ch\\
\vskip0.5cm}
\today
}

\maketitle

\doublespace

\abstract
A birth-death-sampling model gives rise to phylogenetic trees with samples from the past and the present. Interpreting ``birth'' as branching speciation, ``death'' as extinction, and ``sampling'' as fossil preservation and recovery, this model -- also referred to as the fossilized birth-death (FBD) model -- gives rise to phylogenetic trees on extant and fossil samples. The model has been mathematically analyzed and successfully applied to a range of datasets on  different taxonomic levels, such as  penguins, plants, and insects. However, the current mathematical treatment of this model does not allow for a group of temporally distinct fossil specimens to be assigned to the same species. 

In this paper, we provide a general mathematical FBD modeling framework that explicitly takes ``stratigraphic ranges'' into account, with a stratigraphic range being defined as the lineage interval associated with a single species, ranging through time from the first to the last fossil appearance of the species.
To assign a sequence of fossil samples in the phylogenetic tree to the same species, \textit{i.e.}, to specify a stratigraphic range, we need to   define the mode of speciation. We provide expressions to account for three common speciation modes:  budding (or asymmetric) speciation,  bifurcating (or symmetric) speciation, and anagenetic speciation.

Our equations allow for flexible joint Bayesian analysis of paleontological and neontological data. 
Furthermore, our framework is directly applicable to epidemiology, where a stratigraphic range is the observed duration of infection of a single patient, ``birth'' via budding is transmission, ``death'' is recovery, and ``sampling'' is sequencing the pathogen of a patient. Thus, we present a model that allows for incorporation of multiple  observations through time from a single patient.

\section{Introduction}

Inferring species phylogenies and ultimately the tree of life is one of the main goals of systematics and evolutionary biology. 
Based on inferred species phylogenies, biologists then aim to uncover the dynamics of speciation and extinction (including rates and times). Recovered fossils and sampled extant species are outcomes of a single diversification process of speciation and extinction, and thus share the same evolutionary history. 
Ideally, paleontological and neontological data should be used in combination for reconstructing species phylogenies and estimating speciation and extinction dynamics \cite{Wagner1995, QuentalMarshall2010, SlaterHarmon2013, HuntSlater2016}. 

Joint inference of a time-calibrated phylogeny of living and extinct taxa together with the rates of speciation and extinction requires a model for lineage diversification that gives rise to 
extant species and fossil samples.
Such a model defines the probability density of a rooted phylogenetic tree of extant species and fossils, conditioned on the speciation, extinction, and sampling parameters of the model. 
This probability density then directly allows us to infer the parameters of the model given a phylogenetic tree, using maximum likelihood or Bayesian inference methods. 
Furthermore, based on molecular or morphological information for the extant species and fossil samples, this probability density -- together with models of molecular sequence and morphological character evolution -- allows us to infer the dated phylogeny of observed extant species and fossils \cite{ZhangEtAl2016,GavryushkinaEtAl2017}. \tsrev{This latter}  inference was initially introduced as the total-evidence dating approach  by Ronquist et al. \cite{Ronquist2012}, where the model on the phylogenetic tree was an extension of the uniform prior on \tsrev{ultrametric} clock trees to trees with terminals of different ages, meaning no speciation, extinction, and sampling parameters were specified.

A popular model giving rise to extant species and fossil samples is  the fossilized birth-death (FBD) process \cite{Stadler2010JTB, HeathEtAl2014}. 
The process starts with one lineage (the initial species) at some time in the past. Each lineage has a rate of branching speciation (birth) and a rate of extinction (death). 
Further, each lineage has a rate of producing a fossil sample. 
At the present, each extant lineage has a probability of being sampled. 
The tree displaying all extant and extinct lineages of the FBD process together with the samples is called the complete tree.
Pruning all lineages without sampled descendants from the complete tree gives rise to the ``sampled tree'' 
on extant and extinct samples \cite{Stadler2010JTB}. A sampled tree is a model for a phylogenetic tree inferred from empirical data. 

The sampled tree has a degree-one node at the start of the initial lineage,   degree-three nodes corresponding to branching events, degree-two nodes corresponding to fossil samples being ancestors of other samples, and degree-one nodes corresponding to extant samples or fossil samples without sampled descendants. 
In our terminology a branch in a complete or sampled tree always connects two adjacent degree-one or degree-three nodes, that is, any branching node necessarily terminates a branch and starts two new branches. 
Note that we assume  that degree-two nodes (fossil samples) do not subdivide lineages into branches, unless otherwise stated in the subsection.

Stadler \cite{Stadler2010JTB} provides an example of a complete and a sampled tree in figure 1 of that paper. 
The probability of a sampled tree on extant and fossil samples was calculated in Stadler \cite{Stadler2010JTB} and later in Didier et al.\ \cite{DidierEtAl2012JTB}.
The equations have been implemented in a Bayesian framework for phylogenetic inference as a stand-alone tool \cite{HeathEtAl2014}, 
and as part of the 
BEAST v2.0 \cite{beast2,GavryushkinaEtAl2014,GavryushkinaEtAl2017}, MrBayes \cite{mrbayes, RonquistMrBayes2012, ZhangEtAl2016}, and RevBayes \cite{Hohna2016RevBayes} software packages.
Recently, Didier et al. \cite{DidierEtAl2017} provided a method to evaluate the probability of a sampled tree topology, rather than the sampled tree with branch lengths.

In the FBD model definition, we use the word ``lineage'' rather than ``species''. 
Branching speciation gives rise to an additional species, \textit{i.e.}, co-existing lineages in the sampled tree correspond to different species. 
However, the FBD model does not assign species to lineages through time. 
In particular, a branching speciation event can be considered to occur either via budding (asymmetric) speciation, 
where a single descendant species branches off the ancestral species and both species exist after the speciation event, or via  bifurcating (symmetric) speciation where the ancestral species goes extinct and the event gives rise to two new descendant species (Figure\ \ref{speciation_types}, (i) and (ii)). 
This means that the assignment of species to branches is not specified, and, in particular, branches in the sampled tree do not necessarily correspond to unique species. 
Several  branches in a complete or sampled tree may correspond to the same species due to  budding speciation. 
 For example, consider the tree in  in Figure~\ref{asymmetric_speciation} showing budding speciation: Sp.~1 and Sp.~2 are each represented by 2 branches.
However, a single branch in a sampled tree may also correspond to several different species due to unobserved branching speciation events, \textit{i.e.}, speciation events leading to unsampled lineages. 
For example, in Figure~\ref{asymmetric_speciation}, assume that Sp.~2 is not sampled. Then the branch from the observed speciation event (\textit{i.e.}, the budding event from Sp.~1) to the tip Sp.~3 would represent 2 species, namely Sp.~2 and Sp.~3.
At these unobserved branching speciation events, the species assignment of a branch may change depending on the mode of speciation as defined in Figure\ \ref{speciation_types} (\textit{e.g.}, the species assignment always changes under symmetric speciation).
In summary, while the  FBD model assumes that every co-existing lineage at a particular instant in time belongs to a different species, the FBD model does not make statements about species assignments for lineages through time. 
\begin{figure}
\begin{center}
\includegraphics[width=0.8\textwidth]{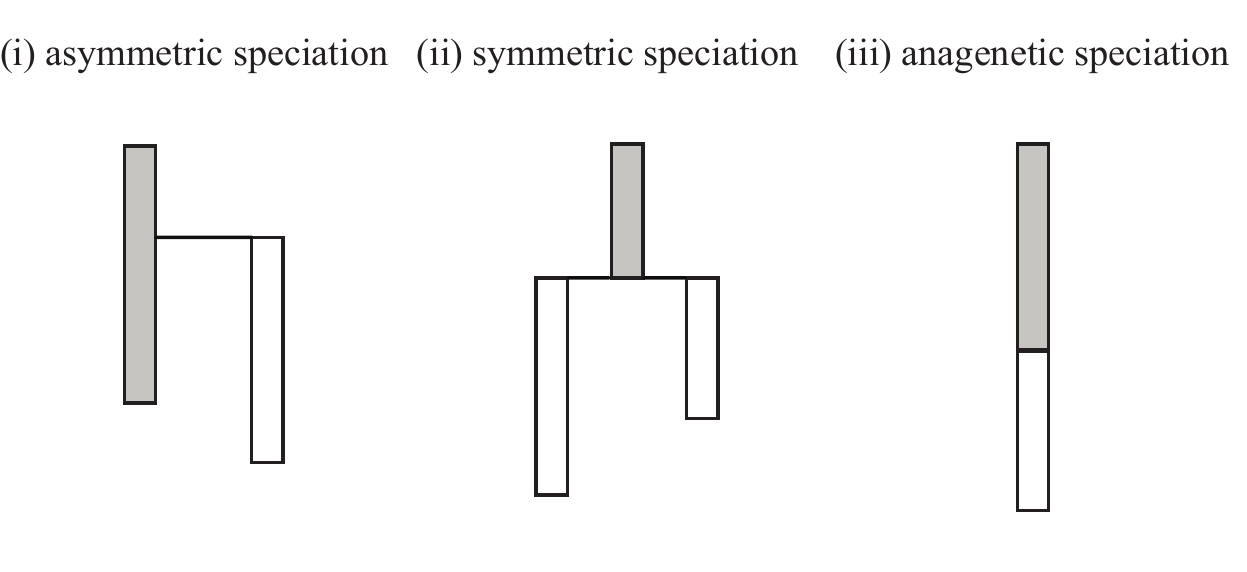} 
\caption{Three speciation modes as described in Foote \cite{Foote1996-fn}. 
The gray and white rectangles represent distinct species. In (i) asymmetric or budding speciation, the ancestral species (gray rectangle) survives after the speciation event whereas in the \tsrev{(ii) symmetric or bifurcating} and (iii) anagenetic cases, the ancestral species is replaced by two or one descendant species.}\label{speciation_types}
\end{center}
\end{figure}

\begin{figure}
\begin{center}
\includegraphics[width=0.7\textwidth]{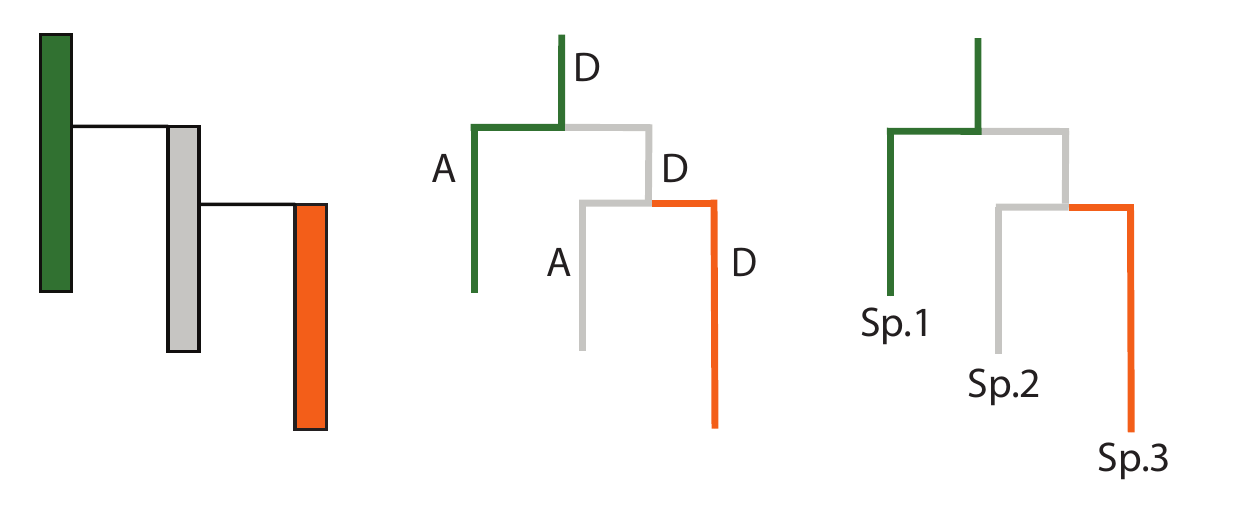} 
\caption{A complete species tree of three species that originated through asymmetric speciation is shown on the left. In the middle, an ``oriented'' species tree is shown with asymmetric speciation corresponding to the species tree of the same three taxa. 
At each speciation event, one of the two new branches is labeled with $A$, because it represents a continuation of the ancestral species, and the other with $D$, designating the new descendant species. 
In an oriented tree, every species is \tsrev{identified} by a unique sequence of $A$ and $D$ branches. Thus, the oldest species \tsrev{is identified by} $DA$, the one that diverges next by $DDA$, and the most recent by $DDD$. 
On the right, a labeled species tree is shown where the orientations are omitted and every species is assigned with a label (taxon name) instead.  
The labeled tree representation is more common for existing phylogenetic software. In all three representations the same-colored segments represent the same species.}\label{asymmetric_speciation}
\end{center}
\end{figure}

In reality, the fossil record often contains multiple observations of the same species over distinct time intervals specifying stratigraphic ranges. 
In order to use this stratigraphic range information from the fossil record, we extend the FBD model to  allow for speciation events of  three different speciation modes
asymmetric (budding), symmetric (bifurcating), and anagenetic (Figure~\ref{speciation_types}). 
Using this extension, we derive the probability density of a sampled tree with stratigraphic ranges. 
As in previous versions of the FBD model, we do not need to assume that we sampled all species, instead the sampling rate explicitly acknowledges incomplete sampling. 
Our equations for the FBD model extension  allow analysis of stratigraphic range data in a phylogenetic framework.

This paper follows a particular structure to  present our new extensions of the FBD model. 
First, we formally define the three speciation modes extending the classic  FBD model in ``The FBD model under three modes of speciation''.
Second, we derive the probability density of a sampled tree on stratigraphic ranges under asymmetric speciation  in Section ``Mathematics of the asymmetric speciation FBD model''. 
Third, we derive the probability density under the three modes of speciation in Section ``Mathematics of the mixed speciation FBD model''.
Based on the mathematical results, we use the section ``Marginalizing over the number of fossils within a stratigraphic range'' to describe the derivation of the probability density of a sampled tree on stratigraphic ranges, given that we only know the time of the first and the last fossil sample, rather than the total number of fossil samples within a stratigraphic range.
In section  ``Marginalizing over the number of fossils within a stratigraphic interval'', we derive the probability density of a sampled tree on stratigraphic ranges, given that we only know whether a fossil species was present or absent within a stratigraphic interval (\textit{i.e.}, a time interval), rather than the total number of fossil samples within each interval. 
We summarize the main results in the discussion, highlighting the conceptual use of such equations in a statistical inference framework. 
Further, we discuss the potential \tsrev{of} these new equations \tsrev{for} contribut\tsrev{ing} to advances in the field of macroevolution.
Finally, we highlight how the equations for the asymmetric speciation case, can be directly employed in molecular epidemiology.

\section{The FBD model under three modes of speciation}
Here we  extend the FBD model towards assigning species to lineages through time.  
For species assignment, we need to specify the speciation mode. We consider three modes of speciation as defined in Foote \citet{Foote1996-fn} (see Figure~\ref{speciation_types}). (i) {\it Asymmetric speciation} where an ancestral species gives rise to a new species via budding, \textit{i.e.}, the descendant species branches off the ancestral species and both species exist after the speciation event. (ii) {\it Symmetric} and (iii)  {\it anagenetic speciation} where the ancestral species goes extinct at the speciation event and gives rise to two (in the symmetric case) and to one (in the anagenetic case) new species. 
Thus, in addition to branching speciation (birth), extinction (death), and sampling events in the FBD model, each branching speciation event is assigned to a mode of speciation (asymmetric or symmetric), and anagenetic changes are marked along lineages. 
Thus, these three modes of speciation events partition all lineages into segments representing distinct species. 
All fossil samples that come from the same segment are assigned to a single species corresponding to that segment. 

A ``stratigraphic range'' defines a continuous lineage between the first and last fossil appearance of a species. 
The FBD model with an assignment of species to lineages through time gives rise to a probability density of a sampled tree on stratigraphic ranges, \textit{i.e.}, on extant  and fossil samples where each sample is assigned to a species. 

Thus, using the FBD model with speciation modes for empirical analysis allows us to assign several fossils to each species in a data analysis.  
At its most conservative interpretation, a stratigraphic range is a single morphospecies observed in multiple stratigraphic layers \cite{Oliver1996morphospp}.
Since morphospecies identifications across stratigraphic intervals are the primary data used in paleontology to study diversity and diversification rates, the FBD model with specification of speciation modes allows us to use the primary paleontological data (stratigraphic range data), jointly with extant species data for phylogenetic analysis.

\section{Mathematics of the asymmetric speciation FBD  model} 

In this section, we formally define the FBD model under asymmetric speciation as illustrated in Figure\ \ref{asymmetric_speciation}. 
As a model for speciation and extinction, we assume a birth-death process with each lineage having a branching speciation rate $\lambda$ and  extinction rate $\mu$. 
The process starts with one lineage at time $x_0$ in the past (also called the origin time) and terminates at the present time 0.  
Table \ref{Table:Notation} gives an overview of these and all other parameter definitions used throughout this paper.

\begin{longtable}{|l|p{5in}|}
  \caption{Notation used throughout this paper.}
  \label{Table:Notation}\\
  \hline
$\lambda$	 & rate of branching speciation\\
$\lambda_a$ & rate of anagenetic speciation\\
$\beta$ & probability of symmetric (vs.\ asymmetric) speciation\\
$\psi$ & fossil sampling rate\\
$\rho$ & extant species sampling probability\\
$\mu$ & extinction rate\\
$\eta$ & $(\lambda, \beta, \lambda_a, \mu, \psi, \rho)$\\
$x_0$ & time of origin of a tree\\
$x_1,\ldots,x_{n-j-1}$ & speciation times in a sampled tree\\
$A/D$  & Orientation of the two branches descending a budding branching event\\
$left/right$  & Orientation of the two branches descending a general branching event\\
$n$ & number of sampled stratigraphic ranges, \textit{i.e.}, number of sampled species (some stratigraphic ranges may only be represented by a single sample in the past or present)\\
$m$ & number of sampled stratigraphic ranges where the associated species goes extinct before present\\
$l$ & number of sampled stratigraphic ranges with an extant species sample\\
$j$ & number of sampled-ancestor-stratigraphic ranges\\
$k$ & total number of sampled fossils\\
$\kappa'$ & total number of sampled fossils that represent the start and end times of  stratigraphic ranges (including ranges represented by a single occurrence) \\
$\kappa$ & total number of sampled fossils within the stratigraphic ranges (\textit{i.e.}, $k=\kappa + \kappa'$)\\
$\kappa_\SB$ & indicates the presence of a fossil within a stratigraphic interval if =1, \tsrev{and} absence if =0\\
$v$ & number of branching speciation events in the labeled tree where we know the orientation\\
$w$ & number of budding speciation events (out of the $n-j-1$ speciation events) in a sampled tree\\
$d_i$ & extinction time of species associated with stratigraphic range $i$\\
$o_i$ & time of first observed fossil corresponding to the species represented by stratigraphic range $i$, \textit{i.e.}, start time of stratigraphic range $i$\\
$y_i$ & time of last observed fossil corresponding to the species represented by  stratigraphic range $i$, \textit{i.e.}, end time of stratigraphic range $i$\\
$b_i$ & branching event in extended sampled tree giving rise \tsrev{to the straight line on which stratigraphic range $i$ lies, also called birth time of $i$} \\ 
$\gamma_i$ & number of lineages co-existing at the birth time $b_i$\\
$a(i)$ & most recent stratigraphic range ancestral to stratigraphic range $i$\\
$t_i$ & time of augmented unobserved speciation event \tsrev{that gave rise to the species associated with stratigraphic range $i$, meaning $t_i$ is speciation time of that species} \\
$I$ & \tsrev{set of stratigraphic ranges, with $i \in I$ if $i$ is in the same straight line as its most recent ancestral stratigraphic range $a(i)$ in the graphical representation of the sampled tree}\\
$L_s$ & sum of all stratigraphic range lengths\\
$L_\SB$ & length of a sub-branch spanning a stratigraphic interval\\
$s_i$ & start time of branch $i$\\
$e_i$ & end time of branch $i$\\
$\mathcal{T}_e^o$ & oriented extended sampled tree\\
$\mathcal{T}_e^l$ & labeled extended sampled tree\\
$\mathcal{T}_s^o$ & oriented sampled tree\\
$\mathcal{T}_r$ & tree when ignoring the $\kappa$ fossils within stratigraphic ranges\\
$\mathcal{T}_l$ & tree when ignoring the number of fossils within a stratigraphic interval\\
$\mathcal{D}$ & \tsrev{Summary of fossil occurrence data with $k$ sampled fossils, $l$ sampled extant species, and $n$ sampled stratigraphic ranges with times $o_i, b_i, d_i$}\\
\hline
\end{longtable}

To model ``asymmetric'' or ``budding'' speciation, we assume that one of the two descendant branches of a speciation event belongs to  the ``ancestral'' species and the other belongs to the ``descendant'' species, thus the two descendant branches may be assigned label $A$ for ancestral and label $D$ for descendant species (Figure~\ref{asymmetric_speciation}). 
Following Ford et al.\ \cite{FordEtAl2009}, we call the label assignment for each pair of descendant branches of a speciation event an ``orientation''. 
In an {\it oriented tree} every species is represented by a path that starts with a $D$-branch and may be continued by several (or none) $A$-branches.  
For example, species 2 in the middle tree in Figure\ \ref{asymmetric_speciation} comprises two branches: the initial $D$-branch and one $A$-branch, because it is ancestral to another species (species 3). Species 3 consists of only the starting $D$-branch, because it does not give rise to any other species.

Typically, the graphical representation of trees used in the paleontological literature (\textit{e.g.}, \cite{Bapst2013}) draws all $A$- and $D$-branches that belong to the same species in a single straight line (Figure~\ref{asymmetric_speciation}, left). 
This graphical representation implicitly contains the information introduced by the $A$ and $D$ orientation (Figure~\ref{asymmetric_speciation}, middle). 
Therefore in the remaining figures of the text we use this graphical representation and omit reference to the $A$ and $D$ orientation. 

In addition to the birth-death process with species assignment, we assume a sampling process for fossils and extant tips. 
We assume fossil sampling occurs along each lineage with rate $\psi$, and an extant species is sampled with probability $\rho$. 
The tree that includes all extant and extinct species that evolved from a single ancestor during time interval $x_0$ together with all the samples is called the ``complete tree''. 
Figure \ref{FigTree}, left (ignoring the blue and grey colors for the moment), displays a complete tree on  eight species with the extant and fossil samples shown using black diamonds. 
Five species have sampled fossils (species 1,2,4,5,6), and two species (species 3,4) have an extant sample. 
Two species (7 and 8) are not sampled.

\begin{figure}
\includegraphics[width=15cm]{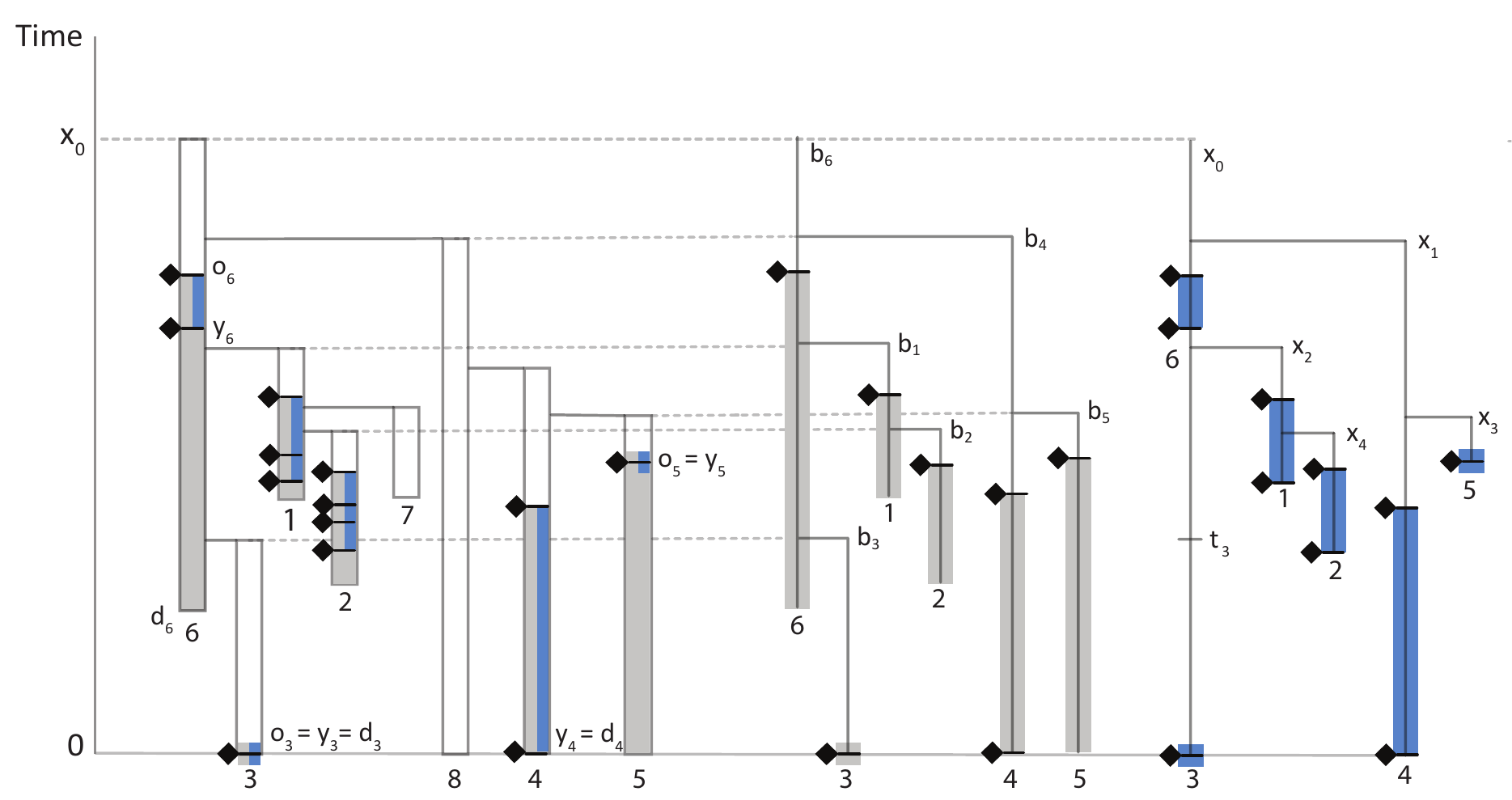}
\caption{Example of a complete tree (left) and its extended sampled tree (middle) and sampled tree (right). 
We mark all fossil and extant species' samples with a diamond. 
The stratigraphic ranges are marked in blue, the extended stratigraphic ranges in grey. 
 We remind the reader that a  straight line in these trees represents our graphical representation,  meaning the oldest branch in a line is the $D$ branch and all subtending branches are $A$ branches. We omit $D/A$ here for clarity of the figure. Furthermore, we omit in the extended sampled tree the fossils within stratigraphic range $i$ that are younger than $o_i$, and in the sampled trees the fossils that appear between $o_i$ and $y_i$, as \tsrev{the times of} these fossils do not contribute to the probability density of the respective tree.
The numbering of species and bifurcation events is chosen to simplify the notation and does not reflect the chronological order of the events.
Theorem \ref{ThmExtendedStickTree} provides the probability density for the oriented extended sampled tree, Corollary \ref{CorExtendedStickTreeLabeled}  for the labeled extended sampled tree, and Corollary \ref{CorExtendedStickTreeCollapse} for the  extended sampled tree when summing over the possible tree topologies. Theorem \ref{ThmStickTree} provides the probability density for the oriented sampled tree.} \label{FigTree}
\end{figure}

\subsection{Stratigraphic ranges and extended stratigraphic ranges}

We now assign all samples belonging to the same species to a stratigraphic range, where $o_i$ (time of the oldest sample of \tsrev{stratigraphic range} $i$) and $y_i$ (time of the youngest sample of \tsrev{stratigraphic range} $i$) are the first and last sampled appearances, respectively.
Note that a stratigraphic range is a segment of a lineage that does not contain $D$-branches, with the exception of the first branch belonging to the segment. 
In other words, it is simply a segment of a straight line in the graphical representation of the tree.
The stratigraphic ranges in Figure\ \ref{FigTree} are marked in blue. 
For all species where only one sample is collected, we have $o_i=y_i$.
For species where we only have an extant sample, the stratigraphic range is only represented by that particular extant sample (species 3 in Figure\ \ref{FigTree}); 
for species where we only have one fossil sample and no extant samples, the stratigraphic range is only represented by that particular fossil sample (species 5 in Figure\ \ref{FigTree}).

We denote the extinction time of \tsrev{ a species associated with stratigraphic range} $i$ with $d_i$ ($d$ for death). We set $d_i=0$ for \tsrev{the associated species being} an extant species. 
An ``extended stratigraphic range'' defines a continuous lineage between the first appearance of a species (time $o_i$ for \tsrev{stratigraphic range} $i$) and the extinction of the species (time $d_i$ for \tsrev{stratigraphic range}  $i$).
The extended stratigraphic range for the six sampled species in Figure~\ref{FigTree} are highlighted in grey. We have $d_i=0$ in the case of the sampled species surviving to the present (species 3,4 in Figure~\ref{FigTree}).
If $y_i=0$, the extended stratigraphic range is equivalent to the stratigraphic range (species 3 and 4 in Figure~\ref{FigTree}).
 
Note that species 6 in Figure\ \ref{FigTree} has values $o_6, y_6, d_6$ displayed, but we dropped these values for some of the other species for clarity of the figures.

\subsection{Sampled trees and extended sampled trees}
Lineages without sampled descendants are deleted from the complete tree to obtain a sampled tree.
In this section, we discuss two types of sampled trees: the ``sampled tree'' and the ``extended sampled tree'' (see Figure\ \ref{FigTree}).
To obtain the ``sampled tree'' (or phylogenetic tree) on stratigraphic ranges, all lineages without sampled descendants are deleted from the complete tree. 
Each branching event that is maintained in the extended sampled tree inherits the labels $A$ and $D$ from the corresponding branching event in the complete tree. 
As above, going from root to tip, at each speciation event we draw an $A$-branch as a straight line directly below its ancestral $D$-branch, while we draw the $D$-branch to the right of its ancestral branch (Figure~\ref{FigTree}, see left for the complete tree, and right for the sampled tree).
Thus we generalize the graphical representation to sampled trees, where some species may be missing, meaning straight lines may now correspond to several species. A stratigraphic range remains as a segment of a straight line in the graphical representation of the sampled tree.

To obtain the ``extended sampled tree'' on extended stratigraphic ranges, we delete lineages with no descendant samples from the complete tree keeping the lineages leading to the extinction times $d_i$ of each sampled species $i$ (Figure~\ref{FigTree}, see left for the complete tree, and middle for the extended sampled tree).

Note that the extended sampled tree and the sampled tree are oriented trees.  
Oriented trees facilitate derivations of probability densities, while most phylogenetic trees inferred from empirical data are ``labeled'' trees, \textit{i.e.}, trees where each sample has a unique label but no orientations are assigned.
We obtain a labeled tree from an oriented tree by omitting all $A$/$D$ orientations, and labelling the sampled species uniformly at random  with unique labels. 
However, despite ignoring the orientation in a labeled tree, we may still know the ancestor-descendant relationships of some branches in the extended sampled tree or in the sampled tree, namely  a new species  budding off from a stratigraphic range is known to be the descendant  (\textit{e.g.}, species $2$  in Figure~\ref{FigTree} is a descendant). 
 We will show below how to transform the probability density of an extended sampled tree into a probability density of a labeled extended sampled tree, such that our results can be applied to labeled trees. For sampled trees, we could not find such a transformation.

Further, note that typically we have information about the stratigraphic range only, and not the extended stratigraphic range, since we do not know the extinction time $d_i$ for each stratigraphic range $i$. 
Nevertheless, marginalizing over unknown $d_i$ using numerical techniques (\textit{e.g.}, \tsrev{Markov chain Monte Carlo}, MCMC) may be advantageous compared to considering sampled trees when we have stratigraphic range data but no information about their phylogenetic relationships. 
In this case, we cannot infer the underlying tree topology, however, using extended stratigraphic ranges, we can integrate over tree topologies analytically (Section ``Probability density of the extended sampled stratigraphic ranges'' below).

In what follows, first, we calculate the probability density of an extended sampled tree, 
including expressions for when we only know the extended stratigraphic ranges but lack information on the phylogenetic relationship of these ranges (thus all tree topologies which have the range data embedded may be possible).
Using results for the extended sampled tree, we then calculate the probability density of a sampled tree.

\subsection{Probability density of the extended sampled tree}

We now calculate the probability density of an extended sampled tree. This allows us to estimate the parameters of the FBD  model under asymmetric speciation, $\lambda, \mu, \psi, \rho$, from the extended sampled tree. 
Additionally, this probability density can be used as the tree prior in Bayesian inference to estimate the extended sampled tree, given observed stratigraphic range data. 

For the derivation of the probability density, we need some notation. Throughout this paper, $n$ is the number of sampled species, \textit{i.e.}, in the context of extended sampled trees, $n$ is the number of extended stratigraphic ranges.
The extended sampled tree on $n$ sampled species is a binary tree with $n-1$ branching events. 
One stratigraphic range $i$ has birth time $b_i=x_0$, meaning the stratigraphic range  did not originate via speciation but started the process (species 6 in Figure\ \ref{FigTree}).
All other stratigraphic ranges originate via branching off another lineage, or, more formally, via the branching event giving rise to the most recent $D$-branch ancestral to that particular stratigraphic range.  
For a given stratigraphic range $i$, this time is denoted by $b_i$ ($b$ for birth).  
Note that $b_i$  is the speciation time of the \tsrev{species associated with stratigraphic range} $i$ in the case of a fully sampled tree, but may be the speciation time of a non-sampled ancestor of the sampled stratigraphic range $i$ in the case of incomplete sampling (such as species 4 in Figure\ \ref{FigTree}). 

Let $k$ be the total number of sampled fossils
and $m$ represents the number of sampled  species going extinct before the present, where $d_i>0$.
Of the $n-m$ number of stratigraphic ranges with $d_i = 0$, let $l$ stratigraphic ranges have an extant sample.
In Figure\ \ref{FigTree}, $n=6,k=11,m=3$ (species 1, 2 and 6 go extinct), and $l=2$ (species 3 and 4 have an extant sample). 

Note that if $\mu=0$ and $\rho=1$ we sample all species, as we have no extinct species and sample every extant species.
We call this case the ``guaranteed complete sampling'' case. All other parameter combinations are referred to as ``potential incomplete sampling''.

\begin{thm} 
\label{ThmExtendedStickTreeComplete}
In the case of guaranteed complete sampling ({\em i.e.}, $\mu=0$ and $\rho=1$), the probability density of the oriented extended sampled tree, $\mathcal{T}^o_e$, is,
$$f[\mathcal{T}^o_e \mid \lambda, \mu, \psi, \rho, x_{0} ] = \psi^k \lambda^{n-1} \prod_{i=1}^n e^{-(\lambda+\psi)b_i}.$$
\end{thm}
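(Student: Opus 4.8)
The plan is to use the fact that in the guaranteed complete sampling regime there is nothing to prune and nothing to reconstruct: since $\mu=0$ no lineage ever goes extinct, so $d_i=0$ for every $i$, $m=0$, and every stem is already present; since $\rho=1$ every extant species is observed. Hence the extended sampled tree $\mathcal{T}^o_e$ is exactly the complete tree, that is, a realization of the marked Markov branching process run from time $x_0$ in the past to the present time $0$ in which every currently-alive species independently buds off a new $D$-species at rate $\lambda$ (continuing itself as the $A$-branch) and deposits a fossil at rate $\psi$, with no death. I take $\mathcal{T}^o_e$ to record the oriented topology together with the $n-1$ branching times and all $k$ fossil times, so the claimed formula is an equality of densities with respect to the product of counting measure on oriented topologies and Lebesgue measure on the $n+k-1$ times; note there is no conditioning (on survival, on $n$, and so on) to correct for, precisely because $\mu=0$ and $\rho=1$.

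First I would slice $\mathcal{T}^o_e$ into its $n$ species lineages: the initial lineage, alive on $[0,x_0]$ with $b_i=x_0$, and each budded species $i$, alive on $[0,b_i]$ (no extinction truncates it). The only events occurring along one such lineage are budding speciations at rate $\lambda$ and fossil samplings at rate $\psi$, and by the branching property the $n$ resulting event streams are mutually independent. Then I would write down the density contributed by a single lineage $i$ carrying $c_i$ budding events and $f_i$ fossils: the $c_i+f_i$ events cut $[0,b_i]$ into sub-intervals on which nothing happens, each contributing a factor $e^{-(\lambda+\psi)\ell}$ for its length $\ell$, and the sub-interval lengths sum to $b_i$, while each event contributes its own rate, $\lambda$ or $\psi$. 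This yields a per-lineage factor $\lambda^{c_i}\psi^{f_i}e^{-(\lambda+\psi)b_i}$, which in particular does not depend on the actual fossil times, only on their number, matching the remark in the caption of Figure~\ref{FigTree}; it does depend on the branching times, through the $b_i$.

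Multiplying the independent per-lineage contributions gives $\lambda^{\sum_i c_i}\psi^{\sum_i f_i}\prod_{i=1}^n e^{-(\lambda+\psi)b_i}$, and to conclude I would identify the exponents: each of the $n$ species other than the initial one is born at exactly one budding event and each budding event creates exactly one new species, so $\sum_i c_i=n-1$; and $\sum_i f_i=k$ by the definition of $k$. This is precisely $\psi^k\lambda^{n-1}\prod_{i=1}^n e^{-(\lambda+\psi)b_i}$. One can alternatively read the statement off as the $\mu=0,\rho=1$ specialization of the general Theorem~\ref{ThmExtendedStickTree}, but the direct argument above is shorter and self-contained.

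The step needing the most care, more a matter of bookkeeping than a genuine obstacle, is making ``the density of a realization of a marked branching process'' precise: fixing the reference measure as above, invoking the memoryless / strong Markov property to see that each lineage contributes an independent marked Poisson stream so that the global density factorizes into the per-lineage products, and checking that no combinatorial, orientation, or labeling factor intrudes. That last point is where orientation does the work: in an oriented tree the $A$/$D$ assignment at each branching node is determined rather than chosen, so a branching event contributes exactly $\lambda$ (no factor of $2$ or $\tfrac12$), and since no species carries a name there is no $1/n!$ either. A secondary point worth spelling out is why, when $\mu=0$ and $\rho=1$, the extended sampled tree construction returns the complete tree unchanged, so that the density is read off directly with no marginalization.
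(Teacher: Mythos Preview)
Your proposal is correct and follows essentially the same approach as the paper's proof: decompose the tree into its $n$ species lineages, observe that each lineage of length $b_i$ contributes the ``no event'' factor $e^{-(\lambda+\psi)b_i}$ together with a rate $\lambda$ for each branching and $\psi$ for each fossil, and multiply. The paper's version is terser (it does not spell out the per-lineage counts $c_i,f_i$ or the independence/Markov justification), while you have added useful bookkeeping on why no orientation or labeling factors appear and why the extended sampled tree coincides with the complete tree here; but the underlying argument is the same.
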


\begin{proof}
We know that \tsrev{a lineage from $b_i$ to the extended stratigraphic range $i$ is associated with a single} species, as we have no unobserved branching events. 
Further, $d_i=0$, as we have no extinction events. 
The probability of no event happening on a lineage for time $b_i$ is $e^{-(\lambda+\psi)b_i}$. 
The rate for each of the $n-1$ speciation events is $\lambda$, the rate for each of the $k$ fossilization events is $\psi$. 
Multiplying these components establishes the theorem.
\end{proof}

We note that it follows from  the last theorem that, under guaranteed complete sampling, the probability density of an extended sampled tree (Figure\ \ref{FigTree}, middle, with $k$ sampled fossils), only depends on $k,n$ and the birth times $b_i$ for each stratigraphic range \tsrev{i}. In the following theorem, we show that under potential incomplete sampling, the probability density of the extended sampled tree in fact only depends on $k,n$ and the  times $b_i,o_i$ and $d_i$ for each stratigraphic range and not on the times of each of the $k$ fossils.

\begin{thm} 
\label{ThmExtendedStickTree}
In the case of potential incomplete sampling, the probability density of the oriented extended sampled tree, $\mathcal{T}^o_e$, is,
\begin{equation} \label{EqnExtendedStickTree}
f[\mathcal{T}^o_e \mid \lambda, \mu, \psi, \rho, x_{0} ] =  \frac{\psi^k \mu^m \rho^l (1-\rho)^{n-m-l}}{\lambda(1-p(x_0))}  \prod_{i=1}^n \lambda  \frac{\widetilde{q}_{asym}(o_i)}{\widetilde{q}_{asym}(d_i)} \frac{q(b_i)}{q(o_i)}
\end{equation}
with,
\begin{eqnarray}
p(t) &=& 1+ \frac{-(\lambda - \mu - \psi) + c_1 \frac{e^{-c_1 t}(1-c_2)-(1+c_2) }{e^{-c_1 t} (1-c_2)+(1+c_2)}} {2 \lambda}, \label{Eqnp} \\
\widetilde{q}_{asym}(t) &:=& \sqrt{e^{-t(\lambda+\mu+\psi)} q(t)}, \label{Eqnqtilde} \\
q(t) &=& \frac{4 e^{-c_1 t}}{(e^{-c_1 t} (1-c_2) + (1+c_2))^2} \label{Eqnq}\\
c_1 &=& |\ \sqrt{(\lambda - \mu - \psi)^2 + 4\lambda\psi}\ |, \notag \\
c_2 &=& - \frac{\lambda - \mu - 2\lambda\rho-\psi}{c_1} \label{Eqnc2}. 
\end{eqnarray}
\end{thm}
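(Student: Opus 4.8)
The plan is to build the density of the oriented extended sampled tree by decomposing it into independent contributions, one per branch of the oriented tree, and then re-expressing the branch contributions in terms of the stratigraphic-range quantities $b_i, o_i, d_i$ only. The standard tool here is the pair of auxiliary functions: $p(t)$, the probability that a lineage alive at time $t$ in the past has no sampled descendant (this is the classical birth-death-sampling extinction-type probability solving the Riccati ODE $\dot p = -(\lambda+\mu+\psi)p + \lambda p^2 + \mu$ with $p(0)=1-\rho$), and $q(t)$, the probability-density propagator along an edge that does survive, solving the linear ODE $\dot q = (\lambda+\mu+\psi - 2\lambda p(t))\,q$ with $q(0)=1$. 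Formulas \eqref{Eqnp}--\eqref{Eqnc2} are exactly the closed-form solutions of these ODEs, so I would first recall (citing Stadler \cite{Stadler2010JTB}) that these are the relevant quantities and state their ODEs; everything downstream is bookkeeping.

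First I would describe the generative picture on the oriented extended sampled tree. Walking from the origin $x_0$ toward the present, the tree consists of: the $n-1$ observed branching events, each contributing a rate $\lambda$ and, because it is an \emph{oriented} tree, each observed branching splits into an $A$-branch and a $D$-branch with no extra combinatorial factor; the $k$ observed fossil samples, each contributing rate $\psi$ (note the fossils \emph{inside} a range do not appear in $\mathcal{T}^o_e$, so $k$ here is really $\kappa'$, the range-endpoint fossils, but Theorem~\ref{ThmExtendedStickTree} only claims dependence on $k$ and the density is insensitive to \emph{where} the interior fossils sit — this is the point flagged in the paragraph preceding the theorem and I would remark on it); the $m$ extinction events at the times $d_i>0$, each contributing rate $\mu$; the $l$ extant tips sampled with probability $\rho$; and the $n-m-l$ extant tips that are \emph{not} sampled, each contributing $(1-\rho)$. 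Between marked events, each surviving edge contributes a propagator factor, and — crucially — every place where the oriented extended sampled tree \emph{could} have had a branching into an unsampled subtree but didn't must be integrated out, which is precisely what converts the naive $e^{-(\lambda+\mu+\psi)\,\ell}$ survival factor into the $q$-propagator and injects the $1/p(x_0)$-type conditioning. So the skeleton is: density $= \psi^k\,\mu^m\,\rho^l\,(1-\rho)^{n-m-l}\,\lambda^{n-1}$ times a product of edge propagators.

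The heart of the calculation is identifying the edge-propagator product with $\displaystyle\prod_{i=1}^n \frac{\widetilde q_{asym}(o_i)}{\widetilde q_{asym}(d_i)}\,\frac{q(b_i)}{q(o_i)}$ (up to the overall $1/(\lambda(1-p(x_0)))$ from conditioning on survival/sampling of the process started at $x_0$). I would organize the edges of $\mathcal{T}^o_e$ for each range $i$ into two pieces: the stretch from the birth $b_i$ down to the first fossil $o_i$, which is an ordinary surviving edge of a \emph{lineage that may contain unobserved budding events} — its contribution is a ratio $q(b_i)/q(o_i)$ by the semigroup property of the $q$-propagator — and the stretch from $o_i$ down to the extinction $d_i$ (the part inside the extended range after the first fossil), which must be a \emph{single species} with no unobserved $D$-branches budding off, hence a \emph{different} propagator, namely the one that weights ``survive but produce no new surviving lineage at all along the way,'' and that is exactly $\widetilde q_{asym}$: the $\sqrt{e^{-t(\lambda+\mu+\psi)}q(t)}$ form arises because this ``no-budding survival'' density is the geometric mean of the pure no-event survival $e^{-t(\lambda+\mu+\psi)}$ and the ordinary $q$, a relation I would verify at the ODE level ($\widetilde q_{asym}$ solves $\dot{\widetilde q}_{asym} = (\lambda + \mu + \psi - \lambda p(t) - \lambda)\,\widetilde q_{asym}/1$ — more precisely the average of the two exponents — which encodes ``no death, no sampling, and any budding goes to the $D$-side which must then go extinct unsampled''). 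Telescoping these ratios over the tree, matching each child's $b$ to the appropriate parent time, collapses the product to the stated form; the leftover $\lambda(1-p(x_0))$ in the denominator is the normalization for conditioning the whole process started at $x_0$ on yielding at least one sample, combined with the fact that the $b_i=x_0$ range contributes a $q(x_0)$ rather than a $\lambda$.

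I expect the main obstacle to be the careful treatment of the $o_i$-to-$d_i$ segment and the justification that its weight is $\widetilde q_{asym}(o_i)/\widetilde q_{asym}(d_i)$ rather than $q(o_i)/q(d_i)$. One must argue that along this segment the species is not allowed to spawn a \emph{sampled} $D$-descendant (else the range would end earlier or a new range would appear) while unsampled $D$-descendants, which go extinct without being seen, are allowed and already accounted for — and that this constrained propagator is governed by the ODE whose solution is $\widetilde q_{asym}$. Checking that $\widetilde q_{asym}$ as defined in \eqref{Eqnqtilde} indeed solves this ODE with the right boundary behavior, and that the telescoping is consistent at the $b_i$ birth points (including the degenerate range with $b_i=x_0$), is the bookkeeping-heavy but conceptually decisive step; the rest follows by multiplying the per-branch, per-event factors and simplifying, with the closed forms \eqref{Eqnp}--\eqref{Eqnc2} merely substituted at the end.
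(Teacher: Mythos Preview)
Your approach is essentially the paper's: set up the Riccati ODE for $p(t)$, the linear ODE for $q(t)$ with the $2\lambda p(t)$ term on the segment $[b_i,o_i]$ (either daughter of an unobserved branching may be the unsampled one), and the linear ODE for $\widetilde q_{asym}(t)$ with only a single $\lambda p(t)$ term on the segment $[o_i,d_i]$ (the surviving lineage must be the $A$-branch, so only the $D$-daughter can go unsampled), then multiply the per-branch propagators, the event rates $\lambda^{n-1}\psi^k\mu^m$, the tip factors $\rho^l(1-\rho)^{n-m-l}$, and divide by $1-p(x_0)$; your geometric-mean remark for $\widetilde q_{asym}$ is a pleasant gloss the paper does not make explicit but which is equivalent to verifying the ODE by differentiation. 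Two small corrections: the sign in your $q$-ODE is flipped relative to the paper's time-into-the-past convention (so the coefficient you wrote for $\widetilde q_{asym}$ should read $-(\lambda+\mu+\psi)+\lambda p(t)$, not $(\lambda+\mu+\psi)-\lambda p(t)-\lambda$), and the theorem's $k$ is the full fossil count including interior fossils, not $\kappa'$ --- the density depends on their number through $\psi^k$ but, as you rightly say, not on their positions.
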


\begin{proof}
This proof follows in logic the derivations in Stadler \cite{Stadler2010JTB}. First, define $p(t)$ to be the probability that an individual at time $t$ in the past does not leave any sampled fossils or sampled extant descendants. 
We note that at time $t +\Delta t$ in the past, within a small time interval $\Delta t$,   an extinction event  of a lineage happens with probability ($\mu \Delta t$), no event happens with probability  $(1-(\lambda+\mu+\psi) \Delta t)$, and a speciation event happens with probability $\lambda \Delta t$, thus,
$$p(t+\Delta t) = \mu \Delta t + (1-(\lambda+\mu+\psi) \Delta t)p(t) + \lambda \Delta t p(t)^2.$$
Rearranging and letting $\Delta t \rightarrow 0$ leads to,
$$\frac{d}{dt} p(t) = \mu - (\lambda+\mu+\psi) p(t) + \lambda p(t)^2.$$ 
The initial value is $p(0) = 1-\rho$.
Differentiation of Eq.\,\eqref{Eqnp} and plugging it into this differential equation establishes the expression for $p(t)$ (this was derived in our earlier work  \cite{Stadler2010JTB}).

The probability density of an individual associated with stratigraphic range $i$ at time $t \in [o_i,d_i)$ producing an extended stratigraphic range as observed within $(t,d_i]$ is described by the differential equation,
$$\frac{d}{dt} \widetilde{Q}_{asym}(t) =  - (\lambda+\mu+\psi) \widetilde{Q}_{asym}(t) + \lambda \widetilde{Q}_{asym}(t) p(t).$$
This differential equation is derived analogous to the differential equation for $p(t)$.
The initial value for stratigraphic range $i$ is  $\widetilde{Q}_{asym}(d_i) = c$, where $c=\mu$ if $d_i>0$, $c=\rho$ if $d_i=0$ and the extant species is sampled, and $c=1-\rho$ if $d_i=0$ and the extant species is not sampled.
Differentiation of Eq.\,\eqref{Eqnqtilde} and plugging it into the differential equation shows that $\widetilde{q}_{asym}(t) $ is a solution of the differential equation, with $\widetilde{q}_{asym}(0) =1 $.
Thus for stratigraphic range $i$, $\widetilde{Q}_{asym}(o_i) = \frac{\widetilde{q}_{asym}(o_i)}{\widetilde{q}_{asym}(d_i)} c$. 

Next, stratigraphic range $i$ is traced back into the past from $o_i$ to $b_i$ during which time we do not know if the lineage  belongs to the same species, as there may be unobserved speciation events.
During that interval $[b_i,\tsrev{o}_i)$ the probability density of an individual at time $t$ with $b_i \geq t > o_i$ producing an extended stratigraphic range $i$ as observed is described by the differential equation,
$$\frac{d}{dt} Q(t) =  - (\lambda+\mu+\psi) Q(t) + 2 \lambda Q(t) p(t).$$ 
The initial value for stratigraphic range $i$ is $Q(o_i) = \widetilde{Q}_{asym}(o_i)$.
Note that the additional 2 in the differential equation for $Q(t)$ compared to $\widetilde{Q}_{asym}(t)$ allows for unobserved speciation events where the ancestor species or the descendant species are not sampled, while $\widetilde{Q}_{asym}$ only considers unobserved events where the descendant species is not sampled.
The differential equation for $Q(t)$ has been already solved in Stadler \cite{Stadler2010JTB}: our expression for $q(t)$ in Eq.\ \eqref{Eqnq} is a solution of the differential equation for $Q(t)$ with initial value $q(0)=1$.
Analogous to $\widetilde{Q}_{asym}(o_i)$, we write $Q(b_i) = \frac{q(b_i)}{q(o_i)} \widetilde{Q}\tsrev{_{asym}}(o_i).$

The rate of a lineage originating via budding from another lineage is $\lambda$ 
and sampling of each one of the $k$ fossils happens with rate $\psi$.
Multiplying the probability densities $Q(\tsrev{b}_i)$ for all extended stratigraphic ranges $i$, the speciation rates, and sampling rates, and dividing by the probability of obtaining a sample (\textit{i.e.}, conditioning on sampling via $(1-p(x_0))$), establishes the theorem.
\end{proof}

We note that in the case of guaranteed complete sampling, where $\mu=0$ and $\rho=1$, we have $p(t)=0$ 
and the expression for $\widetilde{q}_{asym}$ simplifies to $\widetilde{q}_{asym}(t)=e^{-(\lambda+\psi)t}$,
an expression that we encountered already in Theorem \ref{ThmExtendedStickTreeComplete}.

\begin{rem}
If we were to know the oriented, complete tree with the fossil samples and extant species samples, meaning there are no unobserved events, regardless of what fossil or extant samples were collected, then we could calculate the probability density of an oriented, complete tree $\mathcal{T}^o_c$ as,  
\begin{equation*} 
f[\mathcal{T}^o_c \mid \lambda, \mu, \psi, \rho, x_{0} ] = \frac{\psi^k \mu^m  \rho^l (1-\rho)^{n-m-l}}{\lambda(1-p(x_0))}  \prod_{i=1}^n \lambda e^{-(\lambda+\mu+\psi)(b_i-d_i)},
\end{equation*}
where $e^{-(\lambda+\mu+\psi)(b_i-d_i)}$ is the probability of observing a single species in the time interval $(b_i,d_i)$.
\end{rem}

\begin{rem}
As a theoretical side note, we further conclude $\frac{q(b_i)}{q(d_i)} \geq \frac{\widetilde{q}_{asym}(b_i)}{\widetilde{q}_{asym}(d_i)} \geq e^{-(\lambda+\mu+\psi)(b_i-d_i)}$. 
For establishing $\frac{\widetilde{q}_{asym}(b_i)}{\widetilde{q}_{asym}(d_i)} \geq e^{-(\lambda+\mu+\psi)(b_i-d_i)}$, we note that the left hand side is the probability density of a given stratigraphic range, with any number of hidden speciation events (including no hidden events); 
the right hand side is the probability density of the stratigraphic range, without hidden speciation events -- this is a special case of the left hand side. 
For establishing $\frac{q(b_i)}{q(d_i)} \geq \frac{\widetilde{q}_{asym}(b_i)}{\widetilde{q}_{asym}(d_i)}$, we note that the right hand side is the probability  density of a stratigraphic range, meaning the lineage between $b_i$ and $d_i$ belongs to the same species, while the left hand side is the probability of a lineage allowing for unobserved speciation events, thus the lineage may correspond to different species before and after unobserved speciation events. 
Again, the right hand side is a special case of the left hand side. 
\end{rem}

Rather than oriented trees, most software packages perform inference over labeled trees (see Figure\,\ref{asymmetric_speciation}, right). 
That means all $n$ sampled species  are labeled uniformly at random with $n$ labels ($n!$ possibilities), and the orientations $A$ and $D$ are summed over, unless we know the orientation. 
We know the orientation if a stratigraphic range produces a new descendant species:  $A$ is the label of the  descending branch associated with the stratigraphic range. 
We denote with $v$ the number of branching speciation events where we know the orientation. 
This leads to the following corollary.

\begin{cor} 
\label{CorExtendedStickTreeLabeled}
In the case of potential incomplete sampling, the probability density of the labeled extended sampled tree, $\mathcal{T}^l_e$, is,
\begin{equation} \label{EqnExtendedStickTreeLabeled}
f[\mathcal{T}^l_e \mid \lambda, \mu, \psi, \rho, x_{0} ] =  \frac{2^{n-v-1}}{n! } \frac{ \psi^k \mu^m \rho^l (1-\rho)^{n-m-l}}{\lambda(1-p(x_0))}  \prod_{i=1}^n  \lambda  \frac{\widetilde{q}_{asym}(o_i)}{\widetilde{q}_{asym}(d_i)} \frac{q(b_i)}{q(o_i)}.
\end{equation}
In the case of guaranteed complete sampling ({\em i.e.}, $\mu=0$ and $\rho=1$), the probability density of the labeled extended sampled tree, $\mathcal{T}^l_e$, is,
$$f[\mathcal{T}^l_e \mid \lambda, \mu, \psi, \rho, x_{0} ] = \frac{2^{n-v-1}}{n!} \psi^k \lambda^{n-1}  \prod_{i=1}^n e^{-(\lambda+\psi)b_i}.$$
\end{cor}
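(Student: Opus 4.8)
The plan is to derive the labeled-tree density from the oriented-tree density (Theorem~\ref{ThmExtendedStickTree}) by a counting argument: a labeled extended sampled tree is obtained from an oriented one by (a) discarding the $A$/$D$ orientations and (b) assigning the $n$ species labels uniformly at random, so each labeled tree corresponds to a collection of oriented trees, and the density of the labeled tree is the sum of the densities of the oriented trees that map to it. Since the oriented-tree density in Eq.~\eqref{EqnExtendedStickTree} depends only on the times $b_i, o_i, d_i$ and the counts $k, m, l, n$ — quantities that are invariant under relabeling and under changing orientations — every oriented tree in the relevant fiber has the \emph{same} density. Hence $f[\mathcal{T}^l_e \mid \cdots] = N \cdot f[\mathcal{T}^o_e \mid \cdots]$, where $N$ is the number of oriented trees mapping to the given labeled tree, and the whole task reduces to computing $N$.

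The key steps, in order, would be: first, fix a labeled extended sampled tree and count the number of ways to recover an orientation. There are $n-1$ branching speciation events; at $v$ of them the orientation is forced (a stratigraphic range that buds off a descendant determines which child branch is $A$ and which is $D$), while at the remaining $n-1-v$ events either child can be the $A$-branch, giving $2^{n-1-v}$ choices. Second, count the labelings: an oriented tree has no labels, so from a given labeled tree we may permute the $n$ labels in $n!$ ways, but — and this is the subtle point — we must check that distinct label permutations composed with distinct orientation choices actually yield distinct oriented trees, i.e.\ that the fiber has size exactly $2^{n-1-v} \cdot n!$ rather than being reduced by symmetries. Third, combine: $f[\mathcal{T}^l_e] = \tfrac{1}{n!}\cdot \tfrac{1}{2^{\,?}}$ — wait, the direction is that summing over the $2^{n-1-v}$ orientations multiplies by that factor, while averaging uniformly over labels divides by $n!$; comparing with the stated $2^{n-v-1}/n!$ prefactor confirms the count. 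Fourth, specialize to $\mu=0,\rho=1$, where Theorem~\ref{ThmExtendedStickTree} collapses to Theorem~\ref{ThmExtendedStickTreeComplete}'s form $\psi^k\lambda^{n-1}\prod_i e^{-(\lambda+\psi)b_i}$, and the same prefactor applies verbatim.

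The main obstacle I expect is Step~2: justifying that the map from (oriented, unlabeled) trees plus a labeling to labeled trees is exactly $n!\cdot 2^{n-1-v}$-to-one, rather than subject to automorphisms that would shrink the fiber. Concretely, one must argue that the oriented extended sampled tree has no nontrivial automorphisms once the sample times $o_i, b_i, d_i$ and the range structure are pinned down — or, more carefully, that the combinatorial bookkeeping in the paper's convention already treats two oriented trees differing by such a symmetry as distinct, so the naive product count is correct. A clean way to handle this is to note that in an oriented tree every species is identified by its unique word in $\{A,D\}$ (as emphasized in Figure~\ref{asymmetric_speciation}), so the oriented tree is rigid; relabeling then genuinely produces $n!$ distinct labeled trees per orientation class, and summing the (constant) oriented density over the $2^{n-1-v}$ orientations and dividing by $n!$ gives exactly Eq.~\eqref{EqnExtendedStickTreeLabeled}. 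I would also double-check the edge case where one range has $b_i = x_0$ (the initiating lineage, species~6 in Figure~\ref{FigTree}): this is a degree-one root, contributes no branching event, and so does not affect the $v$ versus $n-1$ bookkeeping, which is why the exponent is $n-v-1$ and not $n-v$.
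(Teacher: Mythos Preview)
Your proposal is correct and follows essentially the same approach as the paper: the paper's justification (the paragraph immediately preceding the corollary) is precisely the counting argument you outline --- label the $n$ species uniformly at random (factor $1/n!$) and sum over the $2^{n-1-v}$ unknown orientations at the $n-1-v$ branching events where no stratigraphic range forces the $A$/$D$ assignment. Your additional care about automorphisms (rigidity of oriented trees via the $A/D$-word identification) and about the root edge case $b_i=x_0$ goes beyond what the paper spells out, but is consistent with it and does not alter the strategy.
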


\subsection{Probability density of the extended sampled stratigraphic ranges} 

Next we assume that instead of an extended sampled tree, we only know the $n$ sampled stratigraphic ranges with start times $o_i$ and end times $y_i$ ($i=1,\ldots,n$), of which $l$ contain a sampled extant species, and there are $k$ sampled fossils. 
For each stratigraphic range, we augment our data with the values $b_i > o_i$ and $d_i < y_i$ such that there are no gaps, that is, for each $i=1,\ldots,n$ there is $j$ such that $b_i \in (b_j, d_j)$, \tsrev{with the exception of $i$ for which $b_i=x_0$}. 

 We aim to  calculate the probability density of these stratigraphic ranges with the corresponding $b_i$ and $d_i$.
Using  this probability density, one can estimate speciation and extinction rates based on fossil occurrence data (\textit{i.e.}, stratigraphic ranges) by marginalizing numerically over all possible speciation and extinction times ($b_i, d_i$) using methods such as MCMC. 
This has been done previously assuming all extinct species have at least one fossil sample in Silvestro et al.\ \cite{SilvestroEtAl2014}.

In summary, given $o_i,\ y_i,\ b_i$ and $d_i$ ($i=1,\ldots,n$) together with $k$ and $l$ (we summarize $\mathcal{D}=(k,l,\{b_i,d_i,o_i\},i=1,\ldots,n)$), and the parameters $\lambda,\mu,\psi, \rho$, we need to evaluate the probability density of $\mathcal{D}$ given the parameters. 
The probability density of $\mathcal{D}$ is obtained from Theorem \ref{ThmExtendedStickTreeComplete} and \ref{ThmExtendedStickTree} by integrating over all possible tree topologies which have $\mathcal{D}$ embedded.
The following theorem states this probability density. 

Let $\gamma_i$ be the number of lineages co-existing at the birth time $b_i$ of stratigraphic range $i$. For the oldest stratigraphic range $i$ (with birth time $b_i=x_0$), we have $\gamma_i=1$.
In Figure\ \ref{FigTree}, we have $\gamma_1= 2,\gamma_2=4,\gamma_3=4,\gamma_4=1,\gamma_5=3,\gamma_6=1$.

\begin{cor} \label{CorExtendedStickTreeCollapse}
The probability density for $\mathcal{D}$ under potential incomplete sampling is,
\begin{equation} \label{EqnExtendedStickTreeCollapse}
f[\mathcal{D} \mid \lambda, \mu, \psi, \rho, x_{0} ] =  \frac{\psi^k \mu^m \rho^l (1-\rho)^{n-m-l}}{\lambda(1-p(x_0))}  \prod_{i=1}^n \lambda \gamma_i \frac{\widetilde{q}_{asym}(o_i)}{\widetilde{q}_{asym}(d_i)} \frac{q(b_i)}{q(o_i)}.
\end{equation}
The probability density for $\mathcal{D}$ under  guaranteed complete sampling (\textit{i.e.}, $\mu=0$ and $\rho=1$) is,
$$f[\mathcal{D} \mid \lambda, \mu, \psi, \rho, x_{0} ] = \psi^k \lambda^{n-1}  (n-1)! \prod_{i=1}^n e^{(\lambda+\psi)b_i}.$$

\end{cor}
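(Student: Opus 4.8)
The plan is to obtain $f[\mathcal{D}]$ by summing the tree densities of Theorem~\ref{ThmExtendedStickTree} (potential incomplete sampling) and Theorem~\ref{ThmExtendedStickTreeComplete} (guaranteed complete sampling) over every oriented extended sampled tree $\mathcal{T}^o_e$ that has $\mathcal{D}$ embedded. The observation that does the work is that the right-hand side of Eq.~\eqref{EqnExtendedStickTree} depends on $\mathcal{T}^o_e$ only through $k, m, l, n$ and the times $b_i, o_i, d_i$, and all of these are fixed by $\mathcal{D}$, not by the topology. Hence the summand is one and the same number for all trees compatible with $\mathcal{D}$, and $f[\mathcal{D}]$ equals that number times $N$, the count of compatible oriented extended sampled trees. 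So the corollary is equivalent to the purely combinatorial identity $N=\prod_{i=1}^n \gamma_i$.

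To establish that identity I would describe the compatible trees constructively. After pruning, the straight lines of an extended sampled tree are exactly the $n$ extended stratigraphic ranges, the line of range $i$ occupying the interval $[d_i,b_i]$, and the only internal nodes are the $n-1$ budding events that give rise to the non-root ranges (the unique range with $b_i=x_0$ starts the process and is born of no event). Such a tree is therefore completely determined by a choice, for every non-root range $i$, of the lineage off which its $D$-branch buds at time $b_i$; the $A/D$ labelling at that node is then forced (continuing branch $=A$, new branch $=D$), so there is no additional freedom. Processing the ranges in order of decreasing $b_i$, the lineages available as a parent of $i$ are precisely the extended stratigraphic ranges present immediately before time $b_i$, i.e.\ the $j$ with $d_j<b_i<b_j$; that number is $\gamma_i$, it is at least $1$ by the no-gap condition on the augmented $b_i,d_i$, and, crucially, it does not depend on the choices made for the older ranges. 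Since distinct choices give distinct trees and every compatible tree arises from exactly one such choice, $N=\prod_{i=1}^n\gamma_i$ (the root range contributing the factor $\gamma_i=1$). Substituting this count into Theorem~\ref{ThmExtendedStickTree} and absorbing the $\gamma_i$ into the product over $i$ yields Eq.~\eqref{EqnExtendedStickTreeCollapse}.

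For the guaranteed complete sampling case the same reduction applies to Theorem~\ref{ThmExtendedStickTreeComplete}, and here $\prod_{i=1}^n\gamma_i$ simplifies further: when $\mu=0$ every species is extant, so $d_i=0$ for all $i$, and at the birth of the $r$-th oldest range all $r-1$ previously born ranges are still alive; hence that range has $\gamma=r-1$ and $\prod_{i=1}^n\gamma_i=(n-1)!$, which gives the stated expression.

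The main obstacle is the counting step, and within it the claim that the number of admissible parents of range $i$ is $\gamma_i$ regardless of the earlier choices — this rests on the fact that whether a lineage is present at time $b_i$ in the extended sampled tree is dictated by its interval $[d_j,b_j]$ alone — together with verifying that the map from parent-assignments to compatible oriented extended sampled trees is a bijection (surjective because a tree's budding structure recovers the assignment, injective because changing one assignment changes an immediate ancestor--descendant relation). Once that is in hand, the rest is bookkeeping: checking that $\mathcal{D}$ indeed determines $k,m,l,n$ and all the times appearing in Theorems~\ref{ThmExtendedStickTreeComplete} and~\ref{ThmExtendedStickTree}, so the summand is constant, and rearranging the product.
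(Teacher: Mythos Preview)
Your proposal is correct and follows essentially the same approach as the paper. The paper's proof is a single sentence observing that range $i$ is initiated at rate $\lambda\gamma_i$ because any of the $\gamma_i$ coexisting lineages can serve as parent; you have unpacked this into the explicit statement that the Theorem~\ref{ThmExtendedStickTree} density is constant across compatible oriented trees and that there are $\prod_i\gamma_i$ such trees, and you have also supplied the $(n-1)!$ reduction in the complete-sampling case, which the paper simply states.
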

This corollary is a direct consequence of Theorems \ref{ThmExtendedStickTreeComplete} and \ref{ThmExtendedStickTree} by noting that  an extended stratigraphic range $i$ has rate $\lambda \gamma_i $ to be initiated via speciation by one of the $\gamma_i$ coexisting lineages 
(while in  Theorems \ref{ThmExtendedStickTreeComplete}  and \ref{ThmExtendedStickTree} the rate of a branching event along a particular lineage happens with rate $\lambda$). As for the extended sampled tree, the probability density of the extended sampled stratigraphic ranges only depends on $k,n$ and the  times $b_i,d_i$ and $o_i$ for each stratigraphic range and not on the times of each of the $k$ fossils.

The probability density of $\mathcal D$ is obtained by integrating over oriented trees. 
Note that each tree topology giving rise to $\mathcal D$ has a known orientation at each branching event (as we augmented each stratigraphic range $i$ with $b_i$), implying  $v=n-1$.  
Thus, the probability density of $\mathcal D' = (k,l,\{b_i,d_i,o_i\},f, i=1,\ldots,n)$, where $f$ is a mapping of the intervals to some labels, integrated over labeled trees is obtained by multiplying with $\frac{1}{n!}$.

Theorem \ref{ThmExtendedStickTree} for the FBD  model  on extended sampled trees with stratigraphic ranges is the analog of Eq.\,(1) in Gavryushkina et al.\ \cite{GavryushkinaEtAl2014} for the FBD model on sampled trees without stratigraphic ranges, considering fossil phylogenetic relationships explicitly.  Equivalently, Corollary \ref{CorExtendedStickTreeCollapse} is the analog of Eq.\,(1) in Heath et al.\ \cite{HeathEtAl2014} integrating over fossil phylogenetic relationships analytically.

\subsection{Probability density of the sampled tree} 

For the extended sampled tree with stratigraphic ranges described above, we infer extinction times $d_i$ and avoid considering stratigraphic ranges that are sampled ancestors of other stratigraphic ranges.
In this section, we consider the sampled tree spanning the sampled fossils and extant species, without the extinction times $d_i$ (Figure\ \ref{FigTree}, right).
In a sampled tree,  the stratigraphic range $i$ may be a ``tip-stratigraphic range'', meaning the fossil at time $y_i$ is a tip in the sampled tree, or may be a ``sampled-ancestor-stratigraphic range'', meaning the fossil at time $y_i$ has sampled descendants. Species 1-5 correspond to tip-stratigraphic ranges, and species 6 corresponds to a sampled-ancestor-stratigraphic range in Figure\ \ref{FigTree}. Again, as before, we use $n$ to denote the number of sampled species, \textit{i.e.}, the number of stratigraphic ranges. Recall that an extended sampled tree had $n-1$ branching events. Due to the sampled-ancestor-stratigraphic ranges, a sampled tree may have fewer than $n-1$ branching events.

Let $j$ stratigraphic ranges be sampled-ancestor-stratigraphic ranges (in Figure\ \ref{FigTree}, $j=1$).
The sampled tree has branching times $x_1,\ldots,x_{n-j-1}$, and origin time $x_0$. Note that $x_0,x_1,\ldots,x_{n-j-1}$ of a sampled tree is a subset of $b_1,b_2,\ldots,b_n$ of an extended sampled tree.
For derivations only, we consider the oldest and youngest fossils as explicit nodes that subdivide branches in the sampled tree (in contrast to the sections above where a node had degree three or degree one, and in contrast to Stadler \cite{Stadler2010JTB} where all fossils were treated as  nodes in the sampled tree under the classic FBD model without stratigraphic ranges).
Stratigraphic ranges where $o_i=y_i$ are assumed to have a branch between $o_i$ and $y_i$ with  length 0.
The sampled tree then consists of the following nodes:

\begin{itemize}
\item $n-j-1$ degree-three nodes, with the branching times at $x_1,\ldots,x_{n-j-1}$,
\item $n$ degree-two nodes, at the time of the oldest fossils (\textit{i.e.}, the start of a stratigraphic range) $o_1,\ldots,o_n$, 
\item $j$ degree-two nodes, at the sampled-ancestor-stratigraphic range times $y_i$ with $i$ being a sampled-ancestor-stratigraphic range (in our example in Figure\ \ref{FigTree}, $j=1$ and $i=6$), 
\item $n-j$ degree-one nodes (tips), at the tip-stratigraphic range times $y_i$  with $i$ being a tip-stratigraphic range. Of these $n-j$ nodes, $l$ nodes are at time $y_i=0$. For ease of notation in what follows, we label the stratigraphic ranges that are tip-stratigraphic ranges with $i=1,\ldots,n-j$ (in our example in Figure\ \ref{FigTree} stratigraphic ranges 1-5), 
\item  one degree-one node, the origin of the tree at time $x_0$.
\end{itemize}

Each branch connects two nodes which may be of degree one, two, or three.
Thus in the sampled tree, each branch is either fully part of a stratigraphic range, or not at all part of a stratigraphic range. 
A branch belonging fully to a stratigraphic range is called a ``stratigraphic-range branch''. If a stratigraphic-range branch gives rise to a speciation event, precisely one descendant branch is a stratigraphic range-branch.
In total,  $v$ stratigraphic-range branches give rise to a speciation event. In our example (Figure~\ref{FigTree}), $v=1$, as only stratigraphic range 1 gives rise to one additional species (\textit{i.e.}, in the sampled tree, there is only one speciation event that occurs along the sampled stratigraphic range of a given species).
In total the sampled tree has $3n-j-1$ branches. 

Let  $i \in I$ if stratigraphic range $i$ and its most recent ancestral stratigraphic range,  $a(i)$, lie on a straight line in the graphical representation of the sampled tree.
\tsrev{In our example Figure \ref{FigTree}, we have $I=\{3\}$.}
By definition, stratigraphic range $i \in I$ and its most recent ancestral stratigraphic range $a(i)$ belong to different species, thus we need to ensure that there is an unobserved speciation event between $y_{a(i)}$ and $o_i$. We assume that the \tsrev{the species corresponding to stratigraphic range $i$ originated} at time $t_i \in (y_{a(i)},o_i)$, and first augment our data with these times $t_i$ (see $t_3$ in Figure\ \ref{FigTree}, right). Second, we  analytically integrate over $t_i$.
We refer to the oriented, sampled tree as $\mathcal{T}_s^o$.

To obtain the probability density of an oriented sampled tree, we multiply the contribution of each branch in the sampled tree, as in Theorem \ref{ThmExtendedStickTree}. For a branch with start time $s_i$ and end time $e_i$ \tsrev{(forward in time)}, the contribution is $\frac{\widetilde{q}_{asym}(s_i)}{\widetilde{q}_{asym}(e_i)} $ if it is a stratigraphic range-branch, and  $\frac{q(s_i)}{q(e_i)}$ otherwise. We further need to specify the 
initial values at the tips of the tree. 
The initial value of each tip-stratigraphic range with $y_i>0$ is the probability of having no sampled descendants multiplied by the rate of fossil sampling, $\psi p(y_i)$, and the initial value of each tip-stratigraphic range with $y_i=0$ is the probability of sampling an extant species, $\rho$. 

\tsrev{Additionally, we need to correct for the unobserved speciation times of species from~$I$. First, we need to multiply by $\lambda p(t_i)$, --- the probability of a speciation event at the unobserved speciation time $t_i$ and the fact that one of the lineages descending  the speciation event was not sampled. Second, we need to account for the fact that all the branches belonging to the lineage starting at $t_i$ (for a moment we assume that $t_i$ also subdivides a lineage into branches) and ending at $o_i$ are stratigraphic range-branches (as they all belong to the same species associated with stratigraphic range $i$), although we treated them as non-stratigraphic range-branches in the previous paragraph. That means that we first need to multiply by $\frac {q(o_i)}{q(t_i)}$ and then by $\frac {\widetilde{q}_{asym}(t_i)} {\widetilde{q}_{asym}(o_i)}$.}
Thus we obtain the following directly from Theorem \ref{ThmExtendedStickTree}:

\begin{lem} \label{LemStickTree}
In the case of potential incomplete sampling, the probability density of the oriented sampled tree $\mathcal{T}^o_s$ is,
\begin{equation} \label{EqnStickTree}
f[\mathcal{T}^o_s \mid \lambda, \mu, \psi, \rho, x_{0} ] =  \frac{\psi^k \rho^l \lambda^{n-j-1}}{ 1-p(x_0)}  \prod_{i=1}^{3n-j-1}    \hat{q}_{asym}(B_{i}) \prod_{i=1}^{n-j-l} p(y_i) \prod_{i\in I} \lambda p(t_i) \tsrev{\frac {q(o_i)} {q(t_i)} \frac {\widetilde{q}_{asym}(t_i)} {\widetilde{q}_{asym}(o_i)}}
\end{equation}
where the contribution of branch $B_i$ with start time $s_i$ and end time $e_i$ is,
\begin{eqnarray*}
\hat{q}_{asym}(B_i)=
\begin{cases}
\frac{\widetilde{q}_{asym}(s_i)}{\widetilde{q}_{asym}(e_i)} \quad \text{if branch } i \text{ is a stratigraphic range-branch,} \\ 
  \frac{q(s_i)}{q(e_i)} \quad \text{else.}
 \end{cases}
 \end{eqnarray*}
\end{lem}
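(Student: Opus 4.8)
The plan is to derive Lemma~\ref{LemStickTree} as a direct bookkeeping consequence of Theorem~\ref{ThmExtendedStickTree}, by tracking how the passage from the extended sampled tree $\mathcal{T}^o_e$ to the oriented sampled tree $\mathcal{T}^o_s$ changes each factor in Eq.~\eqref{EqnExtendedStickTree}. First I would recall that the extended sampled tree carries the extinction times $d_i$ and treats all stratigraphic ranges symmetrically, with each contributing $\lambda \frac{\widetilde{q}_{asym}(o_i)}{\widetilde{q}_{asym}(d_i)}\frac{q(b_i)}{q(o_i)}$. The sampled tree differs in three ways: (a) the extinction times $d_i$ are not part of the data, so the tip initial values change — a tip-stratigraphic range with $y_i>0$ contributes $\psi\, p(y_i)$ (fossil sampling followed by no sampled descendants) instead of the extended-tree initial value $\mu$, while an extant-sampled range still contributes $\rho$; this accounts for the replacement of $\psi^k\mu^m\rho^l(1-\rho)^{n-m-l}$ by $\psi^k\rho^l\prod_{i=1}^{n-j-l}p(y_i)$ and removes the factors $\widetilde{q}_{asym}(d_i)$ in the denominators; (b) the $j$ sampled-ancestor-stratigraphic ranges do not begin a new branching event in the sampled tree, which reduces the number of speciation events from $n-1$ to $n-j-1$, explaining the power $\lambda^{n-j-1}$; and (c) the $b_i$, $o_i$, $d_i$ factorization per range must be re-expressed as a product over the $3n-j-1$ branches of the sampled tree, each branch contributing $\widetilde{q}_{asym}(s_i)/\widetilde{q}_{asym}(e_i)$ if it lies inside a stratigraphic range and $q(s_i)/q(e_i)$ otherwise — this is just the multiplicativity of the $\widetilde{Q}_{asym}$ and $Q$ differential equations over concatenated sub-intervals, exactly as used in the proof of Theorem~\ref{ThmExtendedStickTree}.

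The second, more delicate, step is the correction for the ranges $i\in I$. Here the most recent ancestral range $a(i)$ and range $i$ sit on the same straight line in the drawing, so by the definition of a stratigraphic range there must be at least one unobserved speciation event along that line between $y_{a(i)}$ and $o_i$; we augment with the speciation time $t_i$ of the species associated with $i$ and then integrate it out. The key observation, which I would spell out, is that in step~(c) above the branches from $t_i$ down to $o_i$ were (provisionally) counted as non-stratigraphic-range branches, contributing $q(o_i)/q(t_i)$ — wait, the fix goes the other direction: those branches actually all belong to the single species of range $i$, so they should contribute $\widetilde{q}_{asym}(t_i)/\widetilde{q}_{asym}(o_i)$ instead. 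Hence we divide out the erroneous factor and multiply in the correct one, i.e. we multiply by $\frac{q(o_i)}{q(t_i)}\frac{\widetilde{q}_{asym}(t_i)}{\widetilde{q}_{asym}(o_i)}$, and additionally multiply by $\lambda\, p(t_i)$ for the rate of the hidden speciation at $t_i$ together with the event that its other descendant lineage left no sampled descendants. Collecting these factors over $i\in I$ gives the last product in Eq.~\eqref{EqnStickTree}.

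Finally I would check the conditioning factor: as in Theorem~\ref{ThmExtendedStickTree} we condition on the tree being sampled by dividing by $1-p(x_0)$, and I would note that the $1/\lambda$ that appeared in Eq.~\eqref{EqnExtendedStickTree} (from the root lineage not needing a speciation event to be born) is absorbed into the adjusted $\lambda$-power, giving $\lambda^{n-j-1}/(1-p(x_0))$ as the prefactor. Assembling (a), (b), (c) and the $I$-correction reproduces Eq.~\eqref{EqnStickTree} term by term, which establishes the lemma.

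The main obstacle I anticipate is getting the $i\in I$ correction exactly right — in particular making sure one neither double-counts nor omits the contribution of the line segment between $y_{a(i)}$ and $t_i$ versus between $t_i$ and $o_i$, and confirming that the "else" case in $\hat q_{asym}$ really did treat the $t_i$-to-$o_i$ segment as a $q$-branch in the preliminary count, so that the multiplicative correction $\frac{q(o_i)}{q(t_i)}\frac{\widetilde{q}_{asym}(t_i)}{\widetilde{q}_{asym}(o_i)}$ has the correct form. The branch-counting identity ($3n-j-1$ branches: $n-j-1$ internal-edge, $n$ range-start nodes, $j$ sampled-ancestor nodes, $n-j$ tips plus origin, giving the stated totals) is routine once the node inventory preceding the lemma is taken as given.
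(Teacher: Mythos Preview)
Your proposal is correct and follows essentially the same approach as the paper: both arguments build the density by multiplying per-branch contributions $\widetilde{q}_{asym}(s_i)/\widetilde{q}_{asym}(e_i)$ or $q(s_i)/q(e_i)$ as in Theorem~\ref{ThmExtendedStickTree}, fix the tip initial values ($\psi p(y_i)$ for fossil tips, $\rho$ for extant tips), and then handle the $i\in I$ correction identically---multiplying by $\lambda p(t_i)$ for the hidden budding event and by $\frac{q(o_i)}{q(t_i)}\frac{\widetilde{q}_{asym}(t_i)}{\widetilde{q}_{asym}(o_i)}$ to swap the $q$-contribution on $[o_i,t_i]$ for the correct $\widetilde{q}_{asym}$-contribution. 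The only cosmetic difference is that you frame the argument as a transformation of Eq.~\eqref{EqnExtendedStickTree} with explicit bookkeeping of how each factor changes, whereas the paper re-applies the branch-multiplication technique of Theorem~\ref{ThmExtendedStickTree} directly to the sampled tree; the paper also notes (which you might make explicit) that the segment from $t_i$ to $o_i$ may itself be subdivided by observed budding events, so the correction really replaces a telescoping product of $q$-ratios by the corresponding $\widetilde{q}_{asym}$-ratios.
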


Rather than augmenting the state space by $t_i, i \in I$, we can integrate analytically over all $t_i$. \tsrev{The integral over $t_i$ is,}
\tsrev{$$ \int  \frac {p(t) \widetilde{q}_{asym}(t)  }{q(t)}dt  =  - \frac 1 \lambda \frac{ \widetilde{q}_{asym}(t)} {q(t)}.$$}
\tsrev{We evaluate this integral over the interval $[o_i, y_{a(i)}]$,} with $a(i)$ being the most recent ancestral stratigraphic range of $i$ as above, and thus obtain:
\begin{thm} \label{ThmStickTree}
In the case of potential incomplete sampling, the probability density of the oriented sampled tree $\mathcal{T}^o_s$ is,
\begin{multline} \label{EqnStickTree}
f[\mathcal{T}^o_s \mid \lambda, \mu, \psi, \rho, x_{0} ] =  \\\frac{\psi^k \rho^l \lambda^{n-j-1}}{ 1-p(x_0)}  \prod_{i=1}^{3n-j-1}    \hat{q}_{asym}(B_{i}) \prod_{i=1}^{n-j-l} p(y_i) \prod_{i \in I} \tsrev{  \Bigl(1 - \frac {q(o_i)} {\widetilde{q}_{asym}(o_i)} \frac{ \widetilde{q}_{asym}(y_{a(i)})} {q(y_{a(i)})}\Bigr)}.
\end{multline}
\end{thm}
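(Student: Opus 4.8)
The plan is to start from Lemma~\ref{LemStickTree}, which already gives the probability density of the oriented sampled tree $\mathcal{T}^o_s$ with the augmented speciation times $t_i$ ($i \in I$) included as part of the state, and then integrate out each $t_i$ analytically. Since the times $t_i$ for distinct $i \in I$ live on disjoint intervals and enter the formula only through the single factor $\lambda p(t_i) \frac{q(o_i)}{q(t_i)} \frac{\widetilde{q}_{asym}(t_i)}{\widetilde{q}_{asym}(o_i)}$, the marginalization factorizes over $I$: for each $i$ we replace this factor by $\int_{t_i} \lambda p(t_i) \frac{q(o_i)}{q(t_i)} \frac{\widetilde{q}_{asym}(t_i)}{\widetilde{q}_{asym}(o_i)} \, dt_i$, with the range of integration being $t_i \in (y_{a(i)}, o_i)$ — the gap between the last fossil of the ancestral range $a(i)$ and the first fossil of range $i$, in which the unobserved speciation must have occurred. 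Here one must be a little careful: in Lemma~\ref{LemStickTree} the sub-branch from $t_i$ down to $o_i$ was treated as a non-stratigraphic-range branch (contributing $q(o_i)/q(t_i)$ via $\hat{q}_{asym}$) and the correction $\frac{q(o_i)}{q(t_i)}\frac{\widetilde{q}_{asym}(t_i)}{\widetilde{q}_{asym}(o_i)}$ re-labels it as a stratigraphic-range branch; after integration the $t_i$-dependence of the branch directly above $t_i$ (the non-stratigraphic-range branch from $y_{a(i)}$ to $t_i$) also gets absorbed, so I will want to combine the $\hat{q}_{asym}$ contribution $q(t_i)/q(y_{a(i)})$ of that branch with the correction factor before integrating. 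The net $t_i$-dependent integrand is then proportional to $p(t_i)\,\widetilde{q}_{asym}(t_i)/q(t_i)$.

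The key analytic input is the antiderivative identity stated just before the theorem,
$$
\int \frac{p(t)\,\widetilde{q}_{asym}(t)}{q(t)}\,dt \;=\; -\frac{1}{\lambda}\,\frac{\widetilde{q}_{asym}(t)}{q(t)},
$$
which I would verify by differentiating the right-hand side: using $\frac{d}{dt}\widetilde{q}_{asym} = -(\lambda+\mu+\psi)\widetilde{q}_{asym} + \lambda \widetilde{q}_{asym}\,p$ and $\frac{d}{dt} q = -(\lambda+\mu+\psi) q + 2\lambda q\, p$ (the defining ODEs recalled in the proof of Theorem~\ref{ThmExtendedStickTree}), the quotient rule gives $\frac{d}{dt}\frac{\widetilde{q}_{asym}}{q} = \frac{\widetilde{q}_{asym}}{q}\bigl[(-(\lambda+\mu+\psi)+\lambda p) - (-(\lambda+\mu+\psi)+2\lambda p)\bigr] = -\lambda p\,\frac{\widetilde{q}_{asym}}{q}$, so $-\frac{1}{\lambda}\frac{\widetilde{q}_{asym}}{q}$ indeed has derivative $p\,\widetilde{q}_{asym}/q$. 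Then I evaluate the definite integral between the limits $o_i$ and $y_{a(i)}$: bringing in the constant factors $\lambda \cdot \frac{q(o_i)}{\widetilde{q}_{asym}(o_i)} \cdot \frac{1}{q(y_{a(i)})}$ that multiply the $t_i$-integrand (the $\lambda$ from the speciation rate, and the factors from $o_i$ and $y_{a(i)}$ which do not depend on $t_i$), I get
$$
\lambda \cdot \frac{q(o_i)}{\widetilde{q}_{asym}(o_i)\, q(y_{a(i)})} \cdot \Bigl[-\tfrac{1}{\lambda}\tfrac{\widetilde{q}_{asym}(t)}{q(t)}\Bigr]_{t=y_{a(i)}}^{t=o_i}
= \frac{q(o_i)}{\widetilde{q}_{asym}(o_i)\, q(y_{a(i)})}\Bigl(\frac{\widetilde{q}_{asym}(y_{a(i)})}{q(y_{a(i)})} - \frac{\widetilde{q}_{asym}(o_i)}{q(o_i)}\Bigr).
$$
Simplifying, this equals $\frac{q(o_i)\,\widetilde{q}_{asym}(y_{a(i)})}{\widetilde{q}_{asym}(o_i)\, q(y_{a(i)})^2}$ times $q(y_{a(i)})$... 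I need to keep the bookkeeping straight here, but the target is precisely $1 - \frac{q(o_i)}{\widetilde{q}_{asym}(o_i)}\frac{\widetilde{q}_{asym}(y_{a(i)})}{q(y_{a(i)})}$, so I must make sure the normalization of the $\hat q_{asym}(B_i)$ product is arranged so that the branch from $y_{a(i)}$ to $t_i$ is \emph{removed} from that product before integrating and the result re-inserted; the ``$1$'' term then comes from the $t=o_i$ endpoint and the subtracted term from the $t=y_{a(i)}$ endpoint.

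The main obstacle — really the only substantive one — is the careful accounting of which branch-contribution factors in the $\prod_{i=1}^{3n-j-1}\hat q_{asym}(B_i)$ of Lemma~\ref{LemStickTree} depend on $t_i$ and must therefore be pulled inside the integral, versus which are constants in $t_i$. One has to recognize that exactly two factors involve $t_i$: the $\hat q_{asym}$-contribution $q(t_i)/q(y_{a(i)})$ of the sub-branch lying above $t_i$ (treated as non-stratigraphic-range in the lemma), and the explicit correction factor $\frac{q(o_i)}{q(t_i)}\frac{\widetilde{q}_{asym}(t_i)}{\widetilde{q}_{asym}(o_i)}$; their product is $\frac{q(o_i)\,\widetilde{q}_{asym}(t_i)}{q(y_{a(i)})\,\widetilde{q}_{asym}(o_i)}$, which together with the $\lambda p(t_i)$ gives the integrand $\lambda p(t_i) \frac{q(o_i)}{q(y_{a(i)})\widetilde{q}_{asym}(o_i)}\cdot\frac{\widetilde{q}_{asym}(t_i)}{1}$ — wait, I need the $1/q(t_i)$ to appear, which it does because the correction already carries $1/q(t_i)$ while the sub-branch carried $q(t_i)$ in the numerator; so in fact one of these cancels and the surviving $t_i$-dependence is $p(t_i)\,\widetilde{q}_{asym}(t_i)/q(t_i)$ only if one instead views the sub-branch as stratigraphic-range from the start. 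The cleanest route is: in $\mathcal{T}^o_s$, treat the whole lineage from $t_i$ to $o_i$ consistently as stratigraphic-range (contributing $\widetilde{q}_{asym}(t_i)/\widetilde{q}_{asym}(o_i)$), and the branch from $y_{a(i)}$ to $t_i$ as non-stratigraphic-range (contributing $q(t_i)/q(y_{a(i)})$); then the $t_i$-integrand is $\lambda p(t_i)\,\frac{\widetilde{q}_{asym}(t_i)}{\widetilde{q}_{asym}(o_i)}\,\frac{q(t_i)}{q(y_{a(i)})}$, and... this still is not $\widetilde{q}_{asym}/q$. I will need to re-derive the exact grouping from Lemma~\ref{LemStickTree} as written, since the lemma's own statement fixes the convention, and then the stated antiderivative with its stated endpoints $[o_i, y_{a(i)}]$ mechanically produces $1 - \frac{q(o_i)}{\widetilde{q}_{asym}(o_i)}\frac{\widetilde{q}_{asym}(y_{a(i)})}{q(y_{a(i)})}$. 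Substituting this closed form for $\prod_{i\in I}$ into \eqref{EqnStickTree} of Lemma~\ref{LemStickTree}, and noting all remaining factors are unchanged and $t_i$-free, yields exactly \eqref{EqnStickTree} of Theorem~\ref{ThmStickTree}.
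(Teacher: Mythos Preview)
Your approach is exactly the paper's: start from Lemma~\ref{LemStickTree} and integrate out each $t_i$ over $(o_i, y_{a(i)})$ using the antiderivative $\int p\,\widetilde q_{asym}/q\;dt = -\tfrac{1}{\lambda}\widetilde q_{asym}/q$. Your verification of that antiderivative via the ODEs for $\widetilde q_{asym}$ and $q$ is a nice addition the paper omits.

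Where you tie yourself in knots is the bookkeeping of which factors depend on $t_i$. The resolution is simpler than you fear: in Lemma~\ref{LemStickTree} the product $\prod_{i=1}^{3n-j-1}\hat q_{asym}(B_i)$ runs over the $3n-j-1$ branches of the sampled tree, and those branches are delimited only by the listed nodes (the $x_i$, the $o_i$, the $y_i$, and $x_0$). The augmented times $t_i$ are \emph{not} branch endpoints in that product; the parenthetical ``for a moment we assume that $t_i$ also subdivides a lineage into branches'' in the text before the lemma is purely for motivating the form of the correction factor, not a change to the branch set. Hence the \emph{only} $t_i$-dependence in the entire expression of Lemma~\ref{LemStickTree} sits in the explicit factor
\[
\lambda\, p(t_i)\,\frac{q(o_i)}{q(t_i)}\,\frac{\widetilde q_{asym}(t_i)}{\widetilde q_{asym}(o_i)}
\;=\;\lambda\,\frac{q(o_i)}{\widetilde q_{asym}(o_i)}\cdot \frac{p(t_i)\,\widetilde q_{asym}(t_i)}{q(t_i)},
\]
and integrating this alone from $o_i$ to $y_{a(i)}$ with your antiderivative gives
\[
\frac{q(o_i)}{\widetilde q_{asym}(o_i)}\Bigl(\frac{\widetilde q_{asym}(o_i)}{q(o_i)}-\frac{\widetilde q_{asym}(y_{a(i)})}{q(y_{a(i)})}\Bigr)
\;=\;1-\frac{q(o_i)}{\widetilde q_{asym}(o_i)}\frac{\widetilde q_{asym}(y_{a(i)})}{q(y_{a(i)})},
\]
which is exactly the factor in Theorem~\ref{ThmStickTree}. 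There is no need to pull any $\hat q_{asym}(B_i)$ inside the integral or to relabel sub-branches; once you see that, the computation is a two-line substitution.
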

\noindent\tsrev{The term within the right product can be written as $ \frac{q(o_i)}{q(y_{a(i)})} \Bigl(\frac{q(y_{a(i)})}{q(o_i)} - \frac {\widetilde{q}_{asym}(y_{a(i)})} {\widetilde{q}_{asym}(o_i)}\Bigr)$. The right bracket is the probability of zero or more unobserved speciation events that change a species along the lineage starting at $y_{a(i)}$ and ending at $o_i$ minus the probability of zero unobserved speciation events that change the species along this lineage.}

For labeled trees, the term that accounts for unobserved speciation events between ancestor-descendant stratigraphic ranges (  \tsrev{$\prod_{i \in I} \lambda p(t_i) \frac {q(o_i)} {\widetilde{q}_{asym}(o_i)} \frac {\widetilde{q}_{asym}(t_i)}{q(t_i)}$ and $\prod_{i \in I} \Bigl(1 - \frac {q(o_i)} {\widetilde{q}_{asym}(o_i)} \frac{ \widetilde{q}_{asym}(y_{a(i)})} {q(y_{a(i)})}\Bigr)$} above) becomes more complex.  A labeled tree does not show the graphical representation, meaning we do not know which ancestor-descendant stratigraphic ranges lie on a straight line and thus we do not know which ranges need to be separated by a speciation event. Instead we need to integrate over the possibilities of these ranges lying on a straight line or not, which is non-trivial. 
Suppose there are two ancestor-descendant stratigraphic ranges that are separated by an observed branching event. 
As we do not know the orientation of this event, there could have been two possible scenarios: 
either the two ranges lie on the same straight line, and thus the separating branching event is a budding speciation event giving rise to an additional new species. In this case we need to enforce an unobserved speciation event between the two ranges such that it is guaranteed that they belong to different species. Alternatively, the observed speciation event causes the two ranges not to lie on the same straight line, then we do not have to force an unobserved speciation event. 

When several speciation events separate a pair of ancestor-descendant stratigraphic ranges or when the same stratigraphic range is the most recent sampled-ancestor of several stratigraphic ranges we could not find a simple expression for the number of different possible scenarios. Thus we cannot provide an expression for the probability density of labeled trees here.

\begin{cor}
In the case of guaranteed complete sampling, the sampled tree equals the extended sampled tree, as $y_i=0$ for all species. Thus the probability densities from Theorem \ref{ThmExtendedStickTreeComplete} and Corollary \ref{CorExtendedStickTreeLabeled} apply.
\end{cor}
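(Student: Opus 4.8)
The plan is to observe that the restriction $\mu=0$, $\rho=1$ collapses the extended sampled tree and the sampled tree onto the same object, so that the densities already established transfer without any further computation. First I would record what $\mu=0$ and $\rho=1$ force. Since $\mu=0$, no lineage of the complete tree goes extinct, so $m=0$ and $d_i=0$ for every stratigraphic range $i$; since $\rho=1$, every lineage present at time $0$ is sampled, so $l=n$, and in particular the youngest sample of each species lies at the present, i.e.\ $y_i=0$ for all $i$. Consequently the stratigraphic range $[o_i,y_i]=[o_i,0]$ coincides with the extended stratigraphic range $[o_i,d_i]=[o_i,0]$ for every $i$.

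Next I would check that the two tree constructions produce the same tree. The extended sampled tree is obtained from the complete tree by pruning lineages with no sampled descendants but retaining the lineages down to the extinction times $d_i$; here there is nothing extra to retain ($d_i=0$) and nothing to prune (no lineage goes extinct, and every present-day lineage is sampled, hence has a sampled descendant). So the complete tree, the extended sampled tree, and the sampled tree coincide, and there are no unobserved speciation events. It then remains to see that the correction terms appearing in Lemma~\ref{LemStickTree} and Theorem~\ref{ThmStickTree} are trivial in this situation: because $y_i=0$ is always a tip of the tree, no stratigraphic range can be a sampled-ancestor-stratigraphic range, so $j=0$; and because there are no unobserved speciation events, no descendant stratigraphic range lies on the same straight line as its most recent ancestral stratigraphic range, so $I=\emptyset$ and the products over $i\in I$ are empty. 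Substituting $m=0$, $l=n$, $j=0$, $d_i=y_i=0$ into the densities then leaves exactly the expressions of Theorem~\ref{ThmExtendedStickTreeComplete} and Corollary~\ref{CorExtendedStickTreeLabeled}.

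With the trees identified and the extra terms gone, the density of the oriented sampled tree is exactly the guaranteed-complete-sampling density of the oriented extended sampled tree, i.e.\ Theorem~\ref{ThmExtendedStickTreeComplete}; and since passing to a labeled tree is the same operation as before (assign the $n$ labels uniformly at random, sum over the orientations we do not know, here reflected in the factor $2^{n-v-1}/n!$), the guaranteed-complete-sampling formula of Corollary~\ref{CorExtendedStickTreeLabeled} applies to the labeled sampled tree. The only step that is more than bookkeeping is the claim $I=\emptyset$: one must argue that without unobserved speciation events there is no place to insert an augmented time $t_i$, equivalently that a single straight line of the sampled tree carries a single species when sampling is complete (a second species on the same straight line would require an unobserved $D$-continuation between them). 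That argument is short, but it is where the content of the corollary actually lies; everything else is a direct specialization of the formulas already proved.
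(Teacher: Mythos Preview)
Your proposal is correct and matches the paper's reasoning; the paper itself offers no separate proof beyond the one-line justification embedded in the statement (``as $y_i=0$ for all species''), so your write-up is simply a careful unpacking of that observation. Your additional verification that $j=0$ and $I=\emptyset$ under guaranteed complete sampling is accurate and makes explicit what the paper leaves implicit.
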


\begin{rem}
An expression for the probability density of sampled trees when ignoring tree topology (analogous to Section ``Probability density of the extended sampled stratigraphic ranges'') does not seem to be  straightforward. 
In fact, it seems more straightforward to integrate over tree topologies of sampled trees using MCMC methods. 
Ignoring tree topology can further be achieved by estimating parameters based on the extended sampled stratigraphic ranges and integrating over $d_i$ using MCMC.

The complication can be attributed to the sampled-ancestor-stratigraphic ranges. 
The $\gamma_i$ for the number of possible attachment points of a stratigraphic range $i$ in the extended sampled tree scenario is independent of the placement of the other stratigraphic ranges. 

In the case of the sampled tree, if we ignore the sampled-ancestor-stratigraphic ranges (\textit{i.e.}, replacing them with normal branches in the sampled tree), 
we can sum over tip-stratigraphic range topologies, analogous to the extended sampled tree scenario where we only have tip-stratigraphic ranges. 

However, we then have to additionally account for  the number of placements of the sampled-ancestor-stratigraphic ranges. 
This number does not seem to follow a simple formula. Consider two sampled-ancestor-stratigraphic ranges with range $(4,3)$ (call it $SA_1$) and $(3.5,2)$ (call it $SA_2$) (see Figure\,\ref{FigExpSampledTree}).
Assume there is one tip-stratigraphic range $X$ with $o_X=y_X=2.5$ and $b_X=5$, then there is space for a sampled-ancestor-stratigraphic range on the interval $(5,2.5)$. 
Additionally, assume there is one tip-stratigraphic range $Y$ with $o_Y=y_Y=1.5$, and $b_Y=5.5$, leaving space on the interval $(5.5,1.5)$, and one tip-stratigraphic range $Z$ with $o_Z=y_Z=1$, and $b_Z=3.8$, leaving space on the interval $(3.8,1)$.

Thus both sampled-ancestor-stratigraphic ranges fit on two  lineages ($SA_1$ on lineages leading to $X$ and to $Y$; $SA_2$ on lineages leading to $Y$ and $Z$), and we could be tempted to multiply by $\gamma_i=2$ for both sampled-ancestor-stratigraphic ranges (meaning we would have in total 4 possible sampled trees). 
However, given $SA_1$ is assigned to lineage $Y$ (out of $X$ and $Y$) then $SA_2$ can only be assigned to $Z$. 
On the other hand, if $SA_1$ is assigned to $X$, then $SA_2$ can be assigned to $Y$ or $Z$. 
Meaning the number of choices for each sampled-ancestor-stratigraphic range are not independent of the other sampled-ancestor-stratigraphic ranges (here we have in total three possible sampled trees; see Figure\ \ref{FigExpSampledTree}). 
This non-independence of sampled-ancestor-stratigraphic range placement in a sampled tree makes the analytic integration over tree topologies non-trivial.

\begin{figure}
\begin{center}
\includegraphics[width=0.6\textwidth]{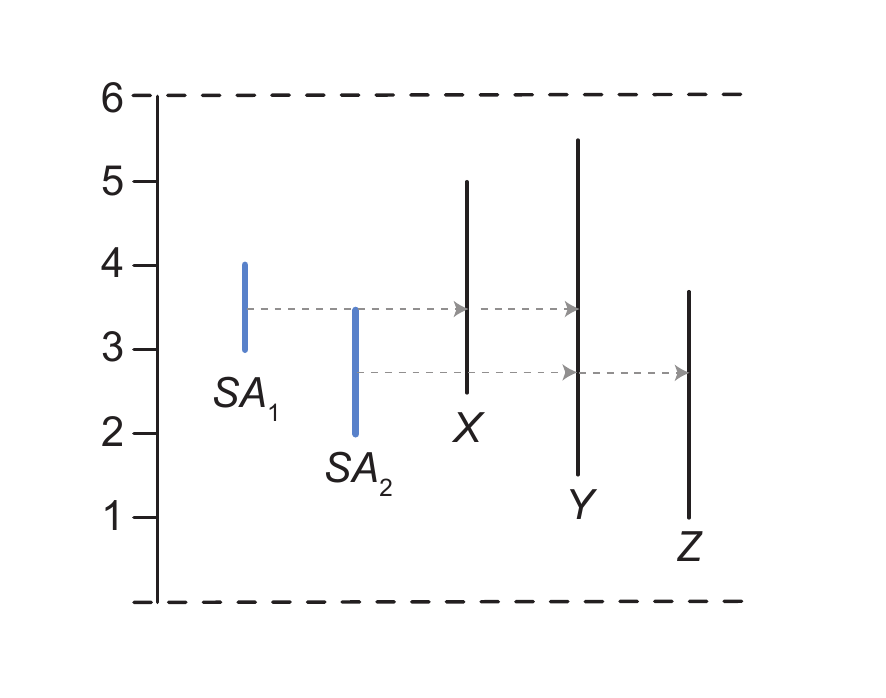} 
\caption{Illustration of sampled-ancestor-stratigraphic range assignment to non-stratigraphic range lineages $X,Y,Z$ of a sampled tree.
If sampled ancestor $SA_1$ is assigned to lineage $X$, then $SA_2$ can be assigned to $Y$ or $Z$, while if $SA_1$ is assigned to lineage $Y$, then $SA_2$ can be assigned only to $Z$. }
\label{FigExpSampledTree}
\end{center}
\end{figure}

If all sampled-ancestor-stratigraphic ranges have length 0 (\textit{i.e.}, $y_i=o_i$), we can analytically sum over topologies following Heath et al.\ \cite{HeathEtAl2014}, as the different sampled ancestor fossils do not influence each other when being assigned to branches.
\end{rem}

\section{Mathematics of the mixed speciation FBD  model} 

After having discussed the FBD  model under asymmetric speciation, we now allow for three speciation modes: 
asymmetric, symmetric, and anagenetic speciation. First we assume that the probability of  a branching speciation event  being symmetric is $\beta$.  That is, we extend the  FBD  model with rates $\lambda, \mu, \psi, \rho$ assigning to each branching event an asymmetric speciation event with probability $1-\beta$ and a symmetric speciation with probability  $\beta$. 
Further, each lineage has rate $\lambda_a$ of producing an anagenetic speciation event, \textit{i.e.}, a speciation event without branching. The mixed speciation model has parameters $\lambda, \mu, \psi, \rho, \beta, \lambda_a$ and setting the additional parameters $\beta$ and $\lambda_a$ to zero converts to the initial asymmetric speciation FBD model.

This mixed speciation FBD model induces oriented trees where each branch is labeled either {\it left} or {\it right}. 
A complete tree produced by this process will be represented by an oriented tree where all nodes have degree-three at most, and all degree-three nodes are of two types reflecting the mode of speciation: asymmetric or symmetric speciation nodes.
At nodes representing an asymmetric speciation event, we always assume that the new species starts with the right branch (which would correspond to a $D$-branch in the previous sections and the left branch would correspond to an $A$-branch). 
The left and right descendant branches of a symmetric speciation event are equivalent and we need the orientation only for the convenience of derivations. 
Further, we have degree-two nodes that represent anagenetic speciation events that also subdivide branches. 
A branch that descends from an anagenetic speciation event has the same orientation as its ancestor branch.

A species in the complete tree under the mixed speciation process is represented by a lineage consisting of a starting branch, which can be: 
\begin{itemize}
\item the initial branch starting at time $x_0$, 
\item a branch produced by symmetric or anagenetic speciation, or
\item the right branch (analogous to the $D$-branch) of an asymmetric speciation event, 
\end{itemize}
and several (or none) left descending branches (analogous to the $A$-branches) produced by asymmetric speciation. 
We  define stratigraphic ranges in the complete tree as before.
 
Following the previous sections, we draw the branches belonging to the same species as straight lines in the complete tree (Figure\,\ref{mixed_speciation}, left). 
Thus, at an asymmetric speciation event the {\it left} branch (analogous to the $A$-branch) continues the ancestral branch and the {\it right} branch (analogous to the $D$-branch) is drawn on the righthand side of the ancestral branch. 
For the symmetric speciation event both descendant branches correspond to a new species and are drawn on both sides of the ancestral branch. 
To designate an anagenetic speciation event we draw the descendant branch slightly shifted to the right of the ancestral branch in the complete tree.

\begin{figure}
\begin{center}
\includegraphics[width=12cm]{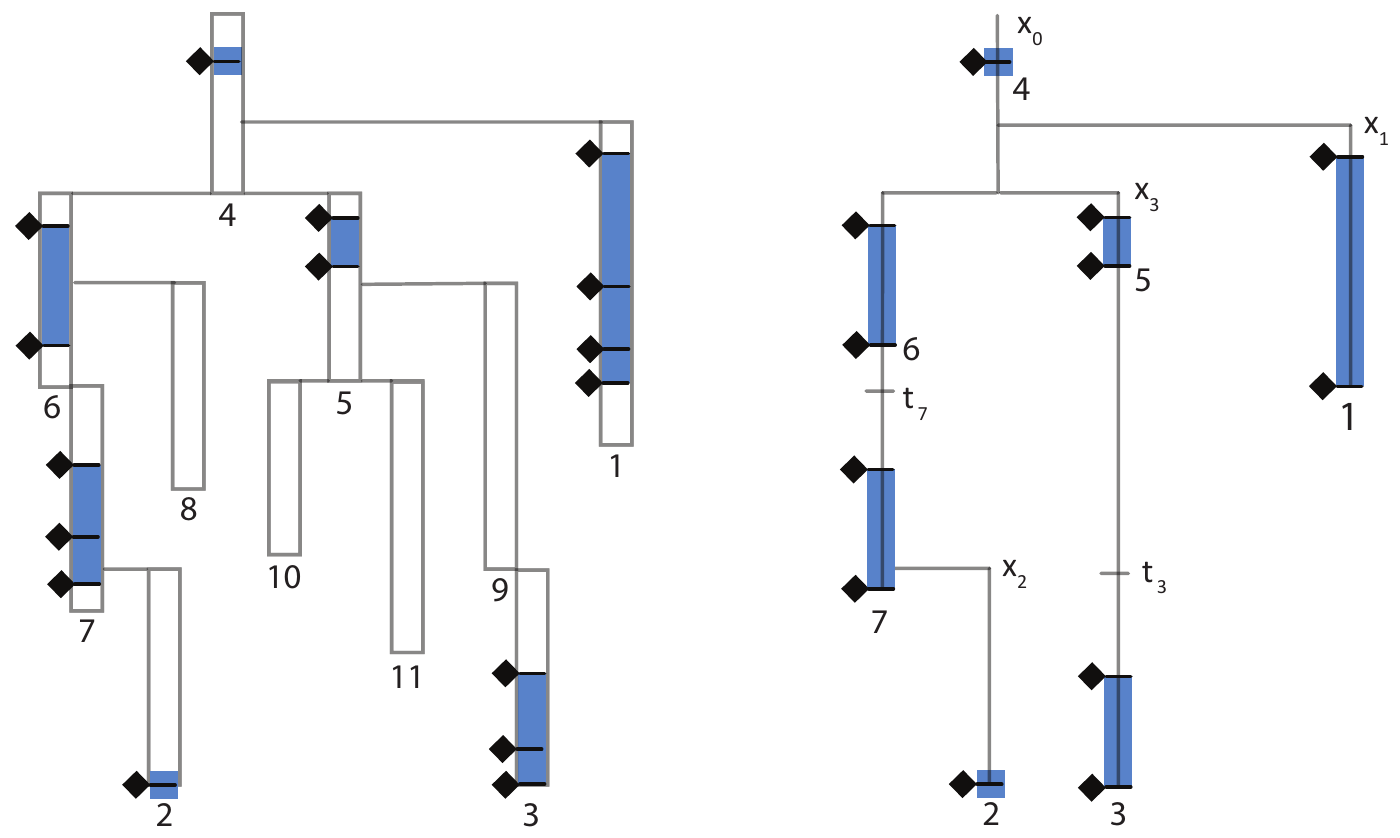} 
\caption{A complete species tree with three 
speciation modes (mixed speciation) is shown on the left. A sampled tree with mixed speciation is shown on the right.
} 
\label{mixed_speciation}
\end{center}
\end{figure}

A sampled tree (Figure\,\ref{mixed_speciation}, right) is obtained by deleting all lineages without sampled descendants and ignoring anagenetic speciation nodes. 
Each branching node inherits its type (asymmetric or symmetric) from the complete tree. 
We draw branches produced by asymmetric and symmetric speciation nodes in the same way as in the complete tree. 
Finally, as in the asymmetric speciation case, a straight line in the sampled tree does not necessarily represent a single species in the sampled tree, as there may be unobserved speciation events.

Analogous to the asymmetric case, a stratigraphic range in the sampled tree is a segment of a lineage that does not contain unobserved symmetric speciation events and \tsrev{asymmetric speciation events where the species associated with the lineage changes}. In other words, it is simply a segment of a straight line in the graphical representation of the sampled tree (see Figure\,\ref{mixed_speciation}).

Suppose again we have sampled $n$ species, \textit{i.e.} $n$ stratigraphic ranges, and consider a sampled tree
describing the phylogenetic relationship of these stratigraphic ranges. 
We define a sampled-ancestor-stratigraphic range as before. 
Let $j$ stratigraphic ranges be sampled-ancestor-stratigraphic ranges, the remaining $n-j$ stratigraphic ranges are  tip-stratigraphic ranges. 
The sampled tree has asymmetric branching times $x_1,\ldots,x_{w}$, symmetric branching times $x_{w+1},\ldots,x_{n-j-1}$, and a time of origin $x_0$. 
For derivations only, we again consider the oldest and youngest fossil of each stratigraphic range as explicit nodes that subdivide branches in the sampled tree. 
For convenience, as before, we count sampled nodes that represent a stratigraphic range consisting of a single fossil  (\textit{i.e.}, $o_i=y_i$) twice, as well as counting zero-length branches that begin and end at these sampled nodes. 
The sampled tree then consists of the following nodes:

\begin{itemize}
\item $w$ degree-three nodes, with the asymmetric branching times at $x_1,\ldots,x_{w}$, 
\item $n-j-1-w$ degree-three nodes, with the symmetric branching times at $x_{w+1},\ldots,x_{n-j-1}$, 
\item $n$ degree-two nodes, at the time of the oldest fossils $o_1,\ldots,o_n$,
\item $j$ degree-two nodes, at the sampled-ancestor-stratigraphic range  times $y_i$, with  $i=n-j+1,\ldots,n$,
\item $n-j$ degree-one nodes (tips), at the tip-stratigraphic range times $y_i$, $i=1,\ldots,n-j$, 
and
\item one degree-one node, the origin of the tree at time $x_0$.
\end{itemize}
In total, the sampled tree has $3n-j-1$ branches (also counting the initial branch beginning at the time of origin).  

As before, we define a set  $I$ consisting of stratigraphic ranges that have their most recent sampled ancestors on a straight line in the graphical representation.

\begin{thm} 
\label{ThmMixed}
In the case of potential incomplete sampling, the probability density of the oriented sampled tree $\mathcal{T}^o_s$ with $w$ asymmetric branching events, under mixed speciation is,
\begin{multline}\label{EqnMixed}
f[\mathcal{T}^o_s  \mid \lambda, \beta, \lambda_a, \mu, \psi, \rho, x_{0} ] = \\ (1-\beta)^w \beta^{n-j-1-w} \frac{\psi^k \rho^l \lambda^{n-j-1}}{ 1-p(x_0)} \prod_{i=1}^{3n-j-1}    \hat{q}(B_{i}) \prod_{i=1}^{n-j-l} p(y_i) \tsrev{ \prod_{i \in I} \Bigl(1 - \frac {q(o_i)} {\widetilde{q}(o_i)} \frac{ \widetilde{q}(y_{a(i)})} {q(y_{a(i)})}\Bigr)}, 
\end{multline}
where $p(t)$, $c_1$, $c_2$, $q(t)$ are defined as in Theorem~\ref{ThmExtendedStickTree},  and the contribution of branch $B_i$ with start time $s_i$ and end time $e_i$ is,
\begin{eqnarray*}
\hat{q}(B_i)=
\begin{cases}
\frac{\widetilde q(s_i)}{\widetilde q(e_i)} \quad \text{if branch } i \text{ is a stratigraphic range-branch,} \\ 
  \frac{q(s_i)}{q(e_i)} \quad \text{else,}
 \end{cases}
 \end{eqnarray*}
with 
\begin{equation*}
\widetilde q(t) := e^{-(\lambda_a+\beta (\lambda + \mu + \psi))t}
 ( \widetilde q_{asym}(t) )^{(1-\beta)}.
\end{equation*}
\end{thm}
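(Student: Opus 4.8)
The plan is to re-run the derivation of Theorem~\ref{ThmStickTree} (through Lemma~\ref{LemStickTree}) almost verbatim, isolating the only three places where the mixed process differs from the purely asymmetric one: the ``no sampled descendant'' probability $p(t)$, the two along-lineage densities carried by branches, and the per-node bookkeeping of the speciation mode. First, $p(t)$ is \emph{unchanged}: an anagenetic event merely relabels a species without altering branching or sampling, and a branching event produces two daughter lineages whether symmetric or asymmetric, so $p$ satisfies the same differential equation and initial condition $p(0)=1-\rho$ as in Theorem~\ref{ThmExtendedStickTree}, hence $p(t)$, $c_1$, $c_2$ retain their asymmetric-case closed forms.

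Second, I would pin down the two along-lineage densities. A non-stratigraphic-range branch may carry arbitrarily many unobserved speciations of any mode; an unobserved branching still contributes $2\lambda p(t)$ (two ways to pick the surviving daughter with the other unsampled, whether the event is symmetric or asymmetric), and an unobserved anagenetic event contributes nothing net (it happens at rate $\lambda_a$ and the lineage continues unchanged), so the non-range density again solves $\frac{d}{dt}Q(t)=-(\lambda+\mu+\psi)Q(t)+2\lambda Q(t)p(t)$ and equals the $q(t)$ of Theorem~\ref{ThmExtendedStickTree}. A stratigraphic-range branch must keep a single species, so anagenetic events and symmetric branchings are forbidden and an unobserved branching is admissible only if it is asymmetric with the lineage remaining on the ancestral side and the budded-off descendant unsampled; this gives $\frac{d}{dt}\widetilde{Q}(t)=-(\lambda_a+\lambda+\mu+\psi)\widetilde{Q}(t)+(1-\beta)\lambda \widetilde{Q}(t)p(t)$, with the same boundary contributions as in Theorem~\ref{ThmExtendedStickTree}. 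Differentiating $\widetilde{q}(t)=e^{-(\lambda_a+\beta(\lambda+\mu+\psi))t}\bigl(\widetilde{q}_{asym}(t)\bigr)^{1-\beta}$ and using $\widetilde{q}_{asym}'/\widetilde{q}_{asym}=-(\lambda+\mu+\psi)+\lambda p$ confirms it solves this equation with $\widetilde{q}(0)=1$. The conceptual reading is that conditioning a lineage on never changing species thins the Poisson stream of unobserved branchings to the Bernoulli$(1-\beta)$ sub-stream of ancestral-side-continuing buddings and deletes the anagenetic stream outright --- which is precisely why the mode probability appears as the $(1-\beta)$-th power of $\widetilde{q}_{asym}$ multiplied by $e^{-(\lambda_a+\beta(\lambda+\mu+\psi))t}$.

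Third, with $p$, $q$, $\widetilde{q}$ fixed the density assembles branch by branch exactly as in Lemma~\ref{LemStickTree}: each of the $3n-j-1$ branches contributes $\widetilde{q}(s_i)/\widetilde{q}(e_i)$ if it is a stratigraphic-range branch and $q(s_i)/q(e_i)$ otherwise; each of the $k$ sampled fossils contributes a rate $\psi$, the $\kappa$ within-range fossils cancelling out of the range-branch products by telescoping so that only the $o_i,y_i$ remain; the $l$ extant samples contribute $\rho$ each and the $n-j-l$ tip ranges with $y_i>0$ contribute $p(y_i)$ each; we divide by $1-p(x_0)$ to condition on at least one sample; and each of the $n-j-1$ observed branching events contributes its rate $\lambda$ together with the probability of its recorded mode, which collects to the global factor $(1-\beta)^w\beta^{n-j-1-w}$. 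Finally, for each $i\in I$ the branch running back from $o_i$ to the sampled-ancestor node at $y_{a(i)}$ was provisionally given the non-range factor $q(y_{a(i)})/q(o_i)$, but this lineage must sustain at least one species-changing unobserved event; the density of that restricted event is $q(y_{a(i)})/q(o_i)-\widetilde{q}(y_{a(i)})/\widetilde{q}(o_i)$ (the full $q$-density minus the density of ``no species change'', which by construction is the $\widetilde{q}$-density), so rescaling the provisional factor yields the stated correction $1-\frac{q(o_i)}{\widetilde{q}(o_i)}\frac{\widetilde{q}(y_{a(i)})}{q(y_{a(i)})}$. Equivalently one may augment the data by the unobserved speciation time $t_i$ exactly as in Lemma~\ref{LemStickTree}, weight it by the total species-changing rate $\lambda_a+(1+\beta)\lambda p(t_i)$ along with the conversion ratios $\frac{q(o_i)}{q(t_i)}\frac{\widetilde{q}(t_i)}{\widetilde{q}(o_i)}$, and integrate $t_i$ over $[o_i,y_{a(i)}]$, noting that $-\widetilde{q}(t)/q(t)$ is an antiderivative of that integrand; the two routes agree, and setting $\beta=\lambda_a=0$ recovers Theorem~\ref{ThmStickTree}.

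The step I expect to be the main obstacle is establishing the stratigraphic-range-branch equation and, in particular, justifying that the mode probability genuinely exponentiates: the content is that conditioning a lineage on ``never changes species'' forces \emph{every} unobserved branching on it to be simultaneously asymmetric and ancestral-side-continuing --- an event of probability $1-\beta$ applied independently at each point of the Poisson branching stream --- and this independent thinning, together with the complete suppression of the anagenetic stream, is what turns the asymmetric density into its $(1-\beta)$-th power times a further exponential. A closely related check is that the ``subtract the no-change density'' computation of the $I$-term stays valid when the required species change on $[o_i,y_{a(i)}]$ may be realised by an anagenetic, a symmetric, or a descendant-side asymmetric event rather than only by budding; this is automatic precisely because $\widetilde{q}$ has been built to be the density of the event ``the lineage never changes species'' under the \emph{full} mixed process, so $\frac{q(y_{a(i)})}{q(o_i)}-\frac{\widetilde{q}(y_{a(i)})}{\widetilde{q}(o_i)}$ is the density of its complement no matter which mechanism does the changing.
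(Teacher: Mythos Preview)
Your proposal is correct and follows essentially the same approach as the paper's proof: both argue that $p(t)$ and $q(t)$ are unchanged under the mixed process, derive the identical differential equation $\frac{d}{dt}\widetilde{Q}(t)=-(\lambda_a+\lambda+\mu+\psi)\widetilde{Q}(t)+(1-\beta)\lambda\,\widetilde{Q}(t)p(t)$ for stratigraphic-range branches, verify that the stated $\widetilde{q}(t)$ solves it, assemble the density branch by branch with the mode factor $(1-\beta)^w\beta^{n-j-1-w}$, and handle the $I$-term by the ``$q$-density minus $\widetilde{q}$-density'' subtraction (with the integral over $t_i$ noted as an alternative in both). Your elaborations---the Poisson-thinning reading of the $(1-\beta)$-th power, the explicit species-changing rate $\lambda_a+(1+\beta)\lambda p(t)$, and the antiderivative check $\frac{d}{dt}\bigl(-\widetilde{q}/q\bigr)=(\lambda_a+(1+\beta)\lambda p)\,\widetilde{q}/q$---go a bit beyond what the paper spells out but are consistent with it and correct.
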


\begin{proof}
Note that the probability densities $p(t)$ and $q(t)$ are the same as in the asymmetric case, as they do not depend on $\beta$ and $\lambda_a$. 
For $p(t)$, we note that the type of branching event does not influence the probability density of not sampling any descendants and only the total rate $\lambda$ will contribute to the expression for $p(t)$. 
The possibility of having a speciation event without branching does not influence the probability density of not sampling any descendants either. 
The equation for $q(t)$ also does not depend on the types of branching events that may have happened along the branch (\textit{i.e.}, asymmetric or symmetric speciation), because in both cases one lineage must not have been sampled and the other must have given rise to the observed tree.  
The possibility of having anagenetic speciation events along a lineage does not influence $q(t)$ either, because anagenetic speciation does not change the sampled tree.

The probability density of an individual associated with stratigraphic range $i$ at time $t \in [o_i,y_i)$ producing an stratigraphic range as observed within $(t,y_i]$ is described by the differential equation,
$$\frac{d}{dt}  \widetilde Q(t) = - (\lambda_a + \lambda + \mu + \psi)\widetilde Q(t) + (1-\beta) \lambda \widetilde Q(t) p(t).$$ 
Here, given that we know that the whole branch belongs to the same species we can eliminate the possibility of anagenetic or symmetric speciation events, along with sampling or death events. 
There may still be asymmetric speciation events along this branch, but the descendant species must not have been sampled, which is accounted for by the second term. 
For stratigraphic range $i$, the initial condition is $\widetilde Q(y_i) = c$ with $c=\psi p(y_i)$ if $y_i>0$ and $i$ is a tip-stratigraphic range, $c=Q(y_i)$ if $y_i>0$ and $i$ is a sampled-ancestor-stratigraphic range, and $c=\rho$ for $y_i=0$.
Plugging the expression for $\widetilde q(t)$ into the differential equation proves that this is a solution with $\widetilde q(0)=1$.
 Thus, $\widetilde Q(t) = \frac{\widetilde q(t)}{\widetilde q(y_i)}c$, and the contribution of stratigraphic range $i$ and its descendants to the probability density of the sampled tree is $\frac{\widetilde q(o_i)}{\widetilde q(y_i)} c$.

In other words, we can write the probability density of the oriented sampled tree with fossil samples partitioned into stratigraphic ranges by multiplying the contribution of all branches ($\hat{q}(B_{i})$) together with the initial values at the tips, the speciation rates, the fossilization rates, and the term for conditioning on a sample, as before. 

\tsrev{The right-most product  in Eq.\,\eqref{EqnMixed} accounts for the required unobserved speciation events prior to the stratigraphic ranges in $I$. Again, for each stratigraphic range $i \in I$, we multiply by $\frac{q(o_i)}{q(y_{a(i)})}$ as we do not want to continue assuming that the interval within $[o_i,y_{a(i)}]$ is associated with an arbitrary number of unobserved events. Then we take the difference of (i) the probability that in this interval any number of unobserved speciation events that change a species along that lineage happened ($\frac{q(y_{a(i)})}{q(o_i)}$) and (ii) the probability that no unobserved speciation   event that change a species along that lineage happened  ($ \frac {\widetilde{q}(y_{a(i)})} {\widetilde{q}(o_i)}$). Simplifying $\frac{q(o_i)}{q(y_{a(i)})}   \left(\frac{q(y_{a(i)})}{q(o_i)} - \frac {\widetilde{q}(y_{a(i)})} {\widetilde{q}(o_i)} \right)$ yields the expression stated in the theorem. Note that as an alternative proof for the right-most product, we could have integrated over $t_i$ as in Theorem \ref{ThmStickTree}.}

\end{proof}

\begin{cor} 
\label{ThmMixedTreeComplete}
Under mixed speciation, in the case of guaranteed complete sampling, the probability density of the oriented sampled tree, $\mathcal{T}^o_s$, is
\begin{multline*}
f[\mathcal{T}^o_s \mid \lambda, \beta, \lambda_a, \psi, x_{0} ] =  (1-\beta)^w\beta^{n-j-1-w} \psi^k \lambda^{n -j -1} 
\prod_{i=0}^{n-j-1}  e^{-(\lambda+\psi)x_i} \tsrev{\prod_{i=0}^{n}  e^{-\lambda_a(o_i-y_i)}} \\ \prod_{i \in I}  \tsrev{(1-  e^{-\lambda_a(y_{a(i)}-o_i)})}.
\end{multline*}
\end{cor}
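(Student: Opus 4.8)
The plan is to specialize Theorem~\ref{ThmMixed} to the guaranteed-complete-sampling regime $\mu=0$, $\rho=1$, and simplify each factor in Eq.~\eqref{EqnMixed}. First I would record that under complete sampling we have $p(t)=0$ (from the differential equation for $p$ with initial condition $p(0)=1-\rho=0$ and $\mu=0$), and consequently $q(t)$ solves $q' = -(\lambda+\psi)q$ with $q(0)=1$, so $q(t)=e^{-(\lambda+\psi)t}$. From the definition $\widetilde q_{asym}(t)=\sqrt{e^{-t(\lambda+\mu+\psi)}q(t)}$ one then gets $\widetilde q_{asym}(t)=e^{-(\lambda+\psi)t}$ as well (this is the simplification already flagged in the remark after Theorem~\ref{ThmExtendedStickTree}). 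Plugging into $\widetilde q(t) := e^{-(\lambda_a+\beta(\lambda+\mu+\psi))t}\,(\widetilde q_{asym}(t))^{(1-\beta)}$ and using $\mu=0$ gives $\widetilde q(t)=e^{-(\lambda_a+\beta(\lambda+\psi))t}\,e^{-(1-\beta)(\lambda+\psi)t}=e^{-(\lambda_a+\lambda+\psi)t}$.

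Next I would evaluate the three products in Eq.~\eqref{EqnMixed}. Since $y_i=0$ for every species under complete sampling (no extinction before the present, and every extant species sampled), the middle product $\prod_{i=1}^{n-j-l}p(y_i)$ has $n-j-l=0$ terms (all $n$ species are extant and sampled, so $l=n$, $j=0$), and is empty; likewise $\rho^l=1$ and the conditioning factor $1-p(x_0)=1$. For the branch product $\prod_{i=1}^{3n-j-1}\hat q(B_i)$, I would split branches into stratigraphic-range branches and the rest: a stratigraphic-range branch from $s_i$ to $e_i$ contributes $\widetilde q(s_i)/\widetilde q(e_i)=e^{-(\lambda_a+\lambda+\psi)(s_i-e_i)}$, and a non-range branch contributes $q(s_i)/q(e_i)=e^{-(\lambda+\psi)(s_i-e_i)}$. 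Telescoping the $e^{-(\lambda+\psi)\cdot}$ factors over all branches along each lineage reproduces $\prod_{i=0}^{n-j-1}e^{-(\lambda+\psi)x_i}$ together with the product over lineage-segment endpoints; and the extra $e^{-\lambda_a\cdot}$ factors, which occur only on stratigraphic-range branches, collect into $\prod_{i=0}^{n}e^{-\lambda_a(o_i-y_i)}$ (one factor per stratigraphic range, of total range length $o_i-y_i$). Finally the product over $i\in I$ becomes $\prod_{i\in I}\bigl(1-\frac{q(o_i)}{\widetilde q(o_i)}\frac{\widetilde q(y_{a(i)})}{q(y_{a(i)})}\bigr)$; substituting the exponentials, $\frac{q(o_i)}{\widetilde q(o_i)}=e^{\lambda_a o_i}$ and $\frac{\widetilde q(y_{a(i)})}{q(y_{a(i)})}=e^{-\lambda_a y_{a(i)}}$, so each term is $1-e^{-\lambda_a(y_{a(i)}-o_i)}$, matching the claimed formula. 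The prefactor $(1-\beta)^w\beta^{n-j-1-w}\psi^k\lambda^{n-j-1}$ carries over unchanged.

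The main obstacle I anticipate is the bookkeeping in the branch product: I need to verify that summing the branch-endpoint times across the $3n-j-1$ branches really telescopes to $\sum_{i=0}^{n-j-1}x_i$ (the $e^{-(\lambda+\psi)}$ exponent) plus the correct stratigraphic-range-length sum $\sum_{i=0}^{n}(o_i-y_i)$ (the $e^{-\lambda_a}$ exponent). This requires being careful about which nodes are degree-two endpoints of stratigraphic ranges (the $o_i$ and the sampled-ancestor $y_i$), the zero-length branches for singleton ranges with $o_i=y_i$, and the fact that each lineage in the complete-sampling sampled tree runs from its birth time $x_i$ (or $x_0$) down to $0$ with its stratigraphic range occupying the sub-interval $[o_i,y_i]$. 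Once this combinatorial telescoping is pinned down, the substitutions $p\equiv 0$, $q(t)=e^{-(\lambda+\psi)t}$, $\widetilde q(t)=e^{-(\lambda_a+\lambda+\psi)t}$ make every remaining step mechanical. I would also double-check the edge convention $o_i-y_i=0$ for extant-only ranges so that those contribute trivially to both the $\lambda_a$-product and (when $i\in I$) are excluded, consistent with there being no unobserved speciation needed when a range has zero length at the present.
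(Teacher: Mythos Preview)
Your overall strategy---specialize Theorem~\ref{ThmMixed} to $\mu=0$, $\rho=1$ and simplify---is exactly what the paper intends (the corollary is stated without a separate proof and is meant to follow by this substitution). Your computations of $p(t)=0$, $q(t)=e^{-(\lambda+\psi)t}$, $\widetilde q(t)=e^{-(\lambda_a+\lambda+\psi)t}$, the telescoping of the branch product to $e^{-(\lambda+\psi)L}=\prod_{i=0}^{n-j-1}e^{-(\lambda+\psi)x_i}$ with the extra $\prod_i e^{-\lambda_a(o_i-y_i)}$ on range branches, and the simplification of the $I$-product are all correct.

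There is, however, a genuine slip in your reasoning about the middle product. You assert that under guaranteed complete sampling ``$y_i=0$ for every species'' and hence ``$l=n$, $j=0$''. This is true under \emph{asymmetric} speciation only (cf.\ the corollary after Theorem~\ref{ThmStickTree}). Under mixed speciation, $\mu=0$ prevents extinction events, but a species can still terminate before the present via a \emph{symmetric} or \emph{anagenetic} speciation event. Such a species has $y_i>0$ (its last fossil precedes its termination), and since every lineage has extant sampled descendants it is a sampled-ancestor-stratigraphic range. So $j$ need not be zero---indeed the corollary retains $j$ in its formula precisely for this reason. The correct argument is that every \emph{tip}-stratigraphic range must have $y_i=0$ (any range with $y_i>0$ is necessarily a sampled ancestor under complete sampling), so $l=n-j$ and therefore $n-j-l=0$; the product $\prod_{i=1}^{n-j-l}p(y_i)$ is empty for that reason, not because $j=0$. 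Once you fix this one sentence, the rest of your argument goes through unchanged.
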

\tsrev{Note that the second to last product accounts for no  anagenetic speciation events within stratigraphic ranges, and the last product accounts for at least one anagenetic speciation event occourring between $y_{a(i)}$ and $o_i$.}

Setting $\beta$ to one and $\lambda_a$ to zero (that is, allowing for only symmetric speciation events) one can obtain the corresponding probability densities for the FBD process with symmetric speciation only. 

\begin{cor}
In the case of potential incomplete sampling, the probability density of the oriented sampled tree $\mathcal{T}^o_s$ under symmetric speciation is,
$$f[\mathcal{T}^o_s \mid \lambda, \mu, \psi, \rho, x_{0} ] =  \frac{\psi^k \rho^l \lambda^{n-j-1}}{ (1-p(x_0))} \prod_{i=1}^{3n-j-1}    \hat{q}_{sym}(B_{i}) \prod_{i=1}^{n-j} p(y_i) \tsrev{ \prod_{i \in I}  \Bigl(1 - \frac {q(o_i)} {\widetilde{q}_{sym}(o_i)} \frac{ \widetilde{q}_{sym}(y_{a(i)})} {q(y_{a(i)})}\Bigr)},
$$
where $p(t)$, $c_1$, $c_2$, and $q(t)$ are defined as in Theorem~\ref{ThmExtendedStickTree}, and
\begin{eqnarray*}
\hat{q}_{sym}(B_i)=
\begin{cases}
\frac{\widetilde q_{sym}(s_i)}{\widetilde q_{sym}(e_i)} \quad \text{if branch } i \text{ is a stratigraphic range-branch,} \\ 
  \frac{q(s_i)}{q(e_i)} \quad \text{else,}
 \end{cases}
 \end{eqnarray*}
with  $\widetilde {q}_{sym}(t) := \widetilde q (t|\lambda, \mu, \psi, \rho, \beta=1, \lambda_a=0) = e^{-(\lambda + \mu + \psi)t}$. 
\end{cor}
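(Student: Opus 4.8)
The plan is to obtain this corollary as a direct specialization of Theorem~\ref{ThmMixed}, substituting $\beta = 1$ and $\lambda_a = 0$ into Equation~\eqref{EqnMixed}. The first point I would check is that these parameter values really do encode ``symmetric speciation only'': with $\beta = 1$ every branching speciation event is symmetric, so there are no asymmetric branching events and the count $w$ of asymmetric branching times is $0$. Consequently the combinatorial prefactor $(1-\beta)^w \beta^{\,n-j-1-w}$ collapses to $0^0 \cdot 1^{\,n-j-1} = 1$ --- equivalently, no choice of speciation mode is being recorded at any node, so this bookkeeping factor is vacuous and simply disappears.

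Next I would substitute into the definition $\widetilde q(t) := e^{-(\lambda_a + \beta(\lambda+\mu+\psi))t}\,(\widetilde q_{asym}(t))^{(1-\beta)}$. Setting $\lambda_a = 0$ kills the $\lambda_a$-term in the exponent, and setting $\beta = 1$ both fixes the exponent at $-(\lambda+\mu+\psi)t$ and turns $(\widetilde q_{asym}(t))^{(1-\beta)}$ into $(\widetilde q_{asym}(t))^{0} = 1$, so $\widetilde q(t) = e^{-(\lambda+\mu+\psi)t}$, which is exactly the claimed $\widetilde q_{sym}(t)$. I would also record the two immediate sanity checks that $\widetilde q_{sym}(0) = 1$ and that $e^{-(\lambda+\mu+\psi)t}$ solves the branch differential equation from the proof of Theorem~\ref{ThmMixed} once $\lambda_a = 0$ and $\beta = 1$ are imposed there (the equation reduces to $\tfrac{d}{dt}\widetilde Q = -(\lambda+\mu+\psi)\widetilde Q$). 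Since $\hat q(B_i)$ is assembled from $\widetilde q$ on stratigraphic-range branches and from $q$ elsewhere, replacing $\widetilde q$ by $\widetilde q_{sym}$ converts $\hat q$ into $\hat q_{sym}$ exactly as stated.

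Finally I would invoke the observation already established in the proof of Theorem~\ref{ThmMixed} that $p(t)$, $q(t)$, $c_1$ and $c_2$ are independent of $\beta$ and $\lambda_a$, so these are carried over verbatim; likewise the conditioning term $1/(1-p(x_0))$, the rate factors $\psi^k \rho^l \lambda^{n-j-1}$, the product of tip initial values, and the product over $i \in I$ (now with $\widetilde q_{sym}$ in place of $\widetilde q$) are all inherited unchanged. Assembling these pieces gives the stated formula. There is essentially no real obstacle here; the only non-mechanical point --- and the one I would state carefully --- is why the asymmetric-versus-symmetric prefactor vanishes, namely that under $\beta = 1$ one has $w = 0$ and the $0^0$ is to be read as the empty product, reflecting that every speciation event is forced to be symmetric.
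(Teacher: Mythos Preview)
Your approach is correct and is exactly what the paper does: the corollary is stated immediately after the sentence ``Setting $\beta$ to one and $\lambda_a$ to zero (that is, allowing for only symmetric speciation events) one can obtain the corresponding probability densities for the FBD process with symmetric speciation only,'' with no further proof given. Your careful handling of the $(1-\beta)^w\beta^{n-j-1-w}$ prefactor via $w=0$ and the explicit computation of $\widetilde q_{sym}(t)$ are precisely the substitutions required, and your observation that $p,q,c_1,c_2$ do not depend on $\beta,\lambda_a$ is already recorded in the proof of Theorem~\ref{ThmMixed}.
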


As expected the expressions for densities $\widetilde q_{asym}(t)$ and $\widetilde q_{sym}(t)$ can be obtained from $ \widetilde q(t)$ by setting $\lambda_a$ to zero and $\beta$ to the extreme values, that is, $\widetilde q_{asym}(t|\lambda, \mu, \psi, \rho)  =  \widetilde q (t|\lambda, \mu, \psi, \rho, \beta=0, \lambda_a=0)$ and $\widetilde q_{sym}(t|\lambda, \mu, \psi, \rho)  =  \widetilde q (t|\lambda, \mu, \psi, \rho, \beta=1, \lambda_a=0)$.

\tsrev{The probability densities derived here can also be used for extinct clades by setting $\rho=1$, acknowledging the fact that we would include all extant species but there are none.}

\section{Marginalizing over the number of fossils within a stratigraphic range}
There may be a degree of uncertainty associated with the number of fossil specimens that were sampled throughout the stratigraphic range of a given species. 
In many cases, more effort has gone into researching the age of the  oldest ($o_i$) and youngest ($y_i$) fossils (the first and last appearances) of a given species, and it is rare that fossils have been sampled within the stratigraphic range with a  constant rate $\psi$. 
Thus, we now derive an expression for the probability density of a tree, given the oldest and youngest fossils of each sampled species, marginalizing over the number of fossils within this range.
In other words, we integrate over the number of fossil samples, $k$, for the probability densities derived above.

Let $\kappa'$ be the total number of sampled fossils that represent the start and end times of a stratigraphic range. 
If a stratigraphic range is represented by a single fossil then this fossil only contributes one towards $\kappa'$. 
Let $\kappa$ be the total number of sampled fossils that are found within any given stratigraphic range. In our example in Figure\ \ref{FigTree} we have $\kappa=3$.
Note that $k=\kappa'+\kappa$.
Let the sum of all stratigraphic range lengths be $L_s = \sum_{i=1}^n o_i-y_i$.

The symbol $\mathcal{T}$ denotes an extended sampled tree $\mathcal{T}_e$ (under asymmetric speciation) or a set of  extended sampled stratigraphic ranges $D$ (under asymmetric speciation) or  a sampled tree $\mathcal{T}_s$ (under asymmetric, symmetric, or mixed speciation). 
Further, $\mathcal{T}$ may be oriented or labeled. 
Let $\mathcal{T}_r$ be $\mathcal{T}$, ignoring the $\kappa$ fossils sampled within stratigraphic ranges. 
We further denote the parameters of the FBD model $(\lambda, \beta, \lambda_a, \mu, \psi, \rho)$ with $\eta$.

\begin{thm} \label{ThmIntegralk}
Both in the case of potential incomplete sampling and guaranteed complete sampling, the probability density of $\mathcal{T}_r$ is,
\begin{eqnarray*} 
f[\mathcal{T}_r \mid \eta,x_0] &=&   \psi^{-\kappa} f[\mathcal{T} \mid \eta, x_0] e^{\psi L_s}.
\end{eqnarray*}
Note that $\kappa$ is unknown, however $\psi^{-\kappa}$ cancels out with $\psi^{\kappa}$ in function $f[\mathcal{T}]$, meaning $f[\mathcal{T}_r ]$ does not depend on $\kappa$ \tsrev{while it depends on $\kappa'$}.
\end{thm}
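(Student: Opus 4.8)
\noindent\emph{Proof sketch.}\quad The plan is to integrate the densities established above over the positions and the number of the $\kappa$ fossils lying strictly inside the stratigraphic ranges. The first observation is that every density formula proved so far---Theorems~\ref{ThmExtendedStickTreeComplete}, \ref{ThmExtendedStickTree}, \ref{ThmStickTree}, \ref{ThmMixed}, Corollaries~\ref{CorExtendedStickTreeLabeled} and~\ref{CorExtendedStickTreeCollapse}, and their guaranteed-complete-sampling analogues---has the shape $f[\mathcal{T}\mid\eta,x_0]=\psi^{\kappa}\,H$, where $\kappa=\sum_{i=1}^{n}\kappa_i$, $\kappa_i$ is the number of fossils strictly inside stratigraphic range $i$, and $H$ depends on $\eta$, $x_0$, $n$, $\kappa'$, the tree topology and orientation (or labels), and the times $b_i,o_i,y_i,d_i$, but neither on $\kappa$ nor on the ages of the $\kappa$ internal fossils. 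Indeed $\psi^{k}=\psi^{\kappa'}\psi^{\kappa}$, and by the remarks following Theorem~\ref{ThmExtendedStickTree} and Corollary~\ref{CorExtendedStickTreeCollapse} the remaining factors do not involve the internal fossil ages; for the sampled-tree formulas this also holds because, when the $\kappa_i$ internal fossils of range~$i$ are reinserted as degree-two nodes at ages $o_i>t_1>\dots>t_{\kappa_i}>y_i$, the stratigraphic-range-branch factors telescope, $\tfrac{\widetilde q(o_i)}{\widetilde q(t_1)}\cdots\tfrac{\widetilde q(t_{\kappa_i})}{\widetilde q(y_i)}=\tfrac{\widetilde q(o_i)}{\widetilde q(y_i)}$, so each internal fossil contributes exactly one factor~$\psi$ and nothing else. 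Thus $H=\psi^{-\kappa}f[\mathcal{T}\mid\eta,x_0]$ is precisely the quantity named in the theorem, and it is a function of the reduced datum $\mathcal{T}_r$ alone.

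Next I would express $f[\mathcal{T}_r\mid\eta,x_0]$ as the marginal of $f[\mathcal{T}\mid\eta,x_0]$ obtained by forgetting the internal fossils: for each stratigraphic range~$i$ we sum over $\kappa_i\ge0$ and integrate the ages of those $\kappa_i$ fossils over the simplex $\{o_i>t_1>\dots>t_{\kappa_i}>y_i\}$, of volume $(o_i-y_i)^{\kappa_i}/\kappa_i!$ (the densities of the earlier theorems being taken with respect to Lebesgue measure on the age-ordered tuple of fossils on a lineage). Since the internal fossils of distinct ranges vary independently and $H$ is constant in all the $\kappa_i$ and all the ages, pulling $H$ out gives
\begin{equation*}
f[\mathcal{T}_r\mid\eta,x_0]\;=\;H\prod_{i=1}^{n}\sum_{\kappa_i\ge0}\psi^{\kappa_i}\,\frac{(o_i-y_i)^{\kappa_i}}{\kappa_i!}\;=\;H\prod_{i=1}^{n}e^{\psi(o_i-y_i)}\;=\;H\,e^{\psi L_s},
\end{equation*}
using $L_s=\sum_{i=1}^{n}(o_i-y_i)$. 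Hence $f[\mathcal{T}_r\mid\eta,x_0]=\psi^{-\kappa}f[\mathcal{T}\mid\eta,x_0]\,e^{\psi L_s}$, as claimed; a range with $o_i=y_i$ forces $\kappa_i=0$ and contributes the factor~$1$, consistently. The computation is word-for-word the same in the guaranteed-complete-sampling case, where the factor $\psi^{k}$ and the Poissonian $\psi$-sampling are still present, and for all the listed versions of $\mathcal{T}$ (oriented or labeled; extended sampled tree, extended sampled stratigraphic ranges, sampled tree, mixed speciation), each having the required $\psi^{\kappa}\times(\text{internal-fossil-independent})$ form.

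The step I expect to be the main obstacle is making this marginalization rigorous, i.e.\ pinning down the reference measure against which the $f[\mathcal{T}]$ of the earlier theorems are densities: it must be Lebesgue measure on the age-ordered tuple of internal fossils of each range, so that the per-range normalizer is the simplex volume $(o_i-y_i)^{\kappa_i}/\kappa_i!$ rather than $(o_i-y_i)^{\kappa_i}$; this is exactly what turns the sum $\sum_{\kappa_i\ge0}\psi^{\kappa_i}(o_i-y_i)^{\kappa_i}/\kappa_i!$ into $e^{\psi(o_i-y_i)}$ rather than a divergent or geometric series. Equivalently, one observes that, conditionally on the lineage persisting through $[y_i,o_i]$, the internal fossils form a homogeneous Poisson process of rate~$\psi$ whose normalizer $e^{-\psi(o_i-y_i)}$ already sits inside $H$; marginalizing the count and positions of that process out of the joint density returns the reciprocal factor $e^{+\psi(o_i-y_i)}$, and the remainder is bookkeeping.
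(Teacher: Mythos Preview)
Your proposal is correct and follows essentially the same approach as the paper: identify $H=\psi^{-\kappa}f[\mathcal{T}\mid\eta,x_0]$ as independent of the number and positions of the interior fossils, then marginalize using the Poisson/exponential series to recover the factor $e^{\psi L_s}$. The only cosmetic difference is that the paper treats the $\kappa$ interior fossils jointly on the union of ranges of total length $L_s$ (obtaining $\sum_{\kappa\ge0}(\psi L_s)^{\kappa}/\kappa!$), whereas you decompose range by range (obtaining $\prod_i\sum_{\kappa_i\ge0}\psi^{\kappa_i}(o_i-y_i)^{\kappa_i}/\kappa_i!$); these are trivially equivalent, and your telescoping justification for the sampled-tree case is a nice explicit touch that the paper leaves implicit.
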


\begin{proof}

Note that $\mathcal{T}$ can be obtained from $\mathcal{T}_r$ by adding the times, $\tau_1,\ldots, \tau_\kappa$, of the $\kappa$ fossils sampled within the stratigraphic ranges. 
Then $f[\mathcal{T}_r, \tau_1,\ldots, \tau_\kappa \mid \eta, x_0] = f[\mathcal{T} \mid \eta, x_0]$ and can be written as:
$$f[\mathcal{T}_r, \tau_1,\ldots, \tau_\kappa \mid \eta, x_0] = \psi^\kappa H,$$ with $H:=\psi^{-\kappa}   f[\mathcal{T} \mid \eta, x_0]$. 

From Theorem \ref{ThmExtendedStickTree} (resp.\ Corollary \ref{CorExtendedStickTreeLabeled}) for  extended oriented (resp.\ labeled) sampled trees under asymmetric speciation, 
Corollary \ref{CorExtendedStickTreeCollapse} for  extended sampled stratigraphic ranges under asymmetric speciation,
and Theorem \ref{ThmMixed} for oriented sampled trees under mixed speciation (and thus in particular under asymmetric or symmetric speciation),  
we observe that $H$ is independent of $\kappa$ and $\tau=(\tau_1,\ldots, \tau_\kappa)$ under potential incomplete sampling, while $H$ depends on the value $\kappa'$.

In the case of guaranteed complete sampling (\textit{i.e.}, $\mu=0$ and $\rho=1$),  
Theorem \ref{ThmExtendedStickTreeComplete}, Corollary \ref{CorExtendedStickTreeLabeled}, \ref{CorExtendedStickTreeCollapse}, and Corollary~\ref{ThmMixedTreeComplete} show that $H$ is independent of $\kappa$ and $\tau$, and again $H$ depends on the value $\kappa'$.

We now want to integrate over all $\tau$ to obtain $f[\mathcal{T}_r, \kappa \mid \eta, x_0 ]$, and then sum over all $\kappa$, to eliminate $\kappa$.
Note that each \tsrev{of the $\kappa$} fossil may be placed anywhere along the stratigraphic ranges with total length $L_s$.

Thus,
\begin{eqnarray} 
f[\mathcal{T}_r, \kappa \mid \eta, x_0] = \int_\tau f[\mathcal{T}_r,\tau_1,\ldots, \tau_\kappa \mid \eta, x_0 ] d \tau &=&  e^{\psi L_s} H \int_\tau \psi^{\kappa} e^{-\psi L_s} d \tau \notag \\
&=& e^{\psi L_s} H \frac{ (\psi L_s)^{\kappa} e^{-\psi L_s}} {\kappa !} . \label{MarginalHelp}
\end{eqnarray}
In the last equation we employed the fact that $\psi^{\kappa} e^{-\psi L_s}$ is the probability density of a realization of a Poisson process with $\kappa$ events over time period $L_s$.
Summing over all $\kappa$ leads to,
\begin{equation*} 
f[\mathcal{T}_r \mid \eta, x_0] = \sum_{\kappa=0}^\infty f[\mathcal{T}_r, \kappa \mid \eta, x_0]=  e^{\psi L_s} H,
\end{equation*}
which establishes the theorem.
\end{proof}

\begin{rem} \label{RemAsymMargi}
Under asymmetric speciation with guaranteed complete sampling for oriented trees, based on Theorem \ref{ThmExtendedStickTreeComplete}, with $L$ being the sum of all branch lengths, \tsrev{Theorem \ref{ThmIntegralk} simplifies to},
$$H =  \psi^{\kappa'} \lambda^{n-1}  \prod_{i=1}^n e^{-(\lambda+\psi)b_i}= \psi^{\kappa'} \lambda^{n-1}  e^{-(\lambda+\psi)L}.$$
Thus,
$$f[\mathcal{T}_r \mid \lambda, \psi, x_{0}] = e^{\psi L_s} H = \psi^{\kappa'} \lambda^{n-1}  e^{-\lambda L} e^{-\psi (L-L_s)}.$$ 

This probability density can also be proven in a direct way. 
The term $\psi^{\kappa'}$ is  the probability density of the fossils at the start and end of a stratigraphic range being sampled. 
The term $ \lambda^{n-1} $ is the rate for the $n-1$ branching events. The probability that no branching events happened along any of the branches is $e^{-\lambda L}$.
The probability that no sampling event happened along any of the branches outside the stratigraphic ranges is $e^{-\psi (L-L_s)}$.
\end{rem}

\begin{rem} \label{RemMixedMargi}
Under mixed speciation with guaranteed complete sampling for oriented trees, based on Corollary \ref{ThmMixedTreeComplete}, with $L$ being the sum of all branch lengths, \tsrev{Theorem \ref{ThmIntegralk} simplifies to},
\begin{multline}
H = \\ 
 \psi^{\kappa'} \lambda^{n- j- 1} (1-\beta)^w \beta^{n-j-1-w}  
e^{-(\lambda+\psi)L} \tsrev{e^{-\lambda_a L_s}\prod_{i \in I} (1-  e^{-\lambda_a(y_{a(i)}-o_i)})}.
\end{multline}
Thus,
\begin{multline*}
f[\mathcal{T}_r \mid \lambda, \beta, \lambda_a, \psi, x_{0}] = e^{\psi L_s} H = \\
\psi^{\kappa'} \lambda^{n-j-1} (1-\beta)^w \beta^{n-j-1-w}  
e^{-\lambda L} e^{-\psi (L-L_s)} \tsrev{e^{-\lambda_a L_s}\prod_{i \in I} (1-  e^{-\lambda_a(y_{a(i)}-o_i)})}.
\end{multline*} 
This probability density can also be proven in a direct way. 
The term $\psi^{\kappa'}$ is  the probability density of the fossils at the start and end of a stratigraphic range being sampled. 
The term $ \lambda^{n-j-1} $ is the rate for the $n-j-1$ branching events, $w$ of which are asymmetric, while the remaining $n-j-1-w$ are symmetric, 
which is accounted for by $(1-\beta)^w \beta^{n-j-1-w}$.  
The probability that no branching events happen along any of the branches is $e^{-\lambda L}$.
The probability that no sampling events happen along any of the branches outside the stratigraphic ranges is $e^{-\psi (L-L_s)}$. \tsrev{The probability that no anagenetic speciation events happen along the stratigraphic ranges is $e^{-\lambda_a L_s}$.}
\tsrev{The term $1-e^{-\lambda_a(y_{a(i)}-o_i)}$} accounts for unobserved anagenetic speciation events that \tsrev{must} have taken place between pairs of \tsrev{ancestor}-descendant stratigraphic ranges that lie along the same line.
\end{rem}

\section{Marginalizing over the number of fossils within a stratigraphic interval}
Instead of recording the age of the oldest and youngest fossils precisely (see previous section), some datasets may only record whether a fossil species was present or not within a given stratigraphic interval spanning the time interval $[x,y]$. 
Thus, a branch of a sampled tree within the time interval $[x,y]$ has either one or no fossil ``samples'' assigned to it, meaning only the presence or absence of a species is recorded. 
In other words, an assignment of one means that at least one fossil specimen of a particular species was found, but in fact any number \tsrev{$k_{x,y}$} fossil specimens may have been found within that  interval.

We will now derive equations accounting for only recording presence\,/\,absence of fossil specimens for a species rather than the exact number  of fossil specimens  \tsrev{$k_{x,y}$} in each stratigraphic interval.

As in the last section, the symbol $\mathcal{T}$ denotes an extended sampled tree $\mathcal{T}_e$ (under asymmetric speciation) or a set of  extended sampled stratigraphic ranges $D$ (under asymmetric speciation) or  a sampled tree $\mathcal{T}_s$ (under asymmetric, symmetric, or mixed speciation). 
Further, $\mathcal{T}$ may be oriented or labeled. A branch in $\mathcal{T}$ connects speciation nodes and/or tip nodes; fossil samples do not induce new branches.
Now we subdivide all branches in $\mathcal{T}$ into sub-branches, the time points for the start and end of the sub-branches are the start and end points of stratigraphic intervals.

\tsrev{Since we do not know the timing for the first and last fossil ($o_i$ and $y_i$) for each stratigraphic range, we have to estimate it. We suggest two options. Either we numerically integrate over $o_i$ and $y_i$ using MCMC methods.
Alternatively, we make an approximation assuming that if a fossil is found in a particular time interval, the corresponding species existed throughout that time  interval, meaning $o_i$ (resp. $y_i$) are the start (resp. end) times of the stratigraphic intervals where a species was found first (resp. last). An exception is that if the fossil is ancestral (resp. descendant) of a speciation node within that interval, then the speciation node is the new time $y_i$ (resp. $o_i$).}

Let $k_\SB$ be the \tsrev{(unknown)} number of fossil specimens along sub-branch $\SB$. We set  $\kappa_{\SB}=1$ if $k_\SB >0$, and $\kappa_{\SB}=0$ otherwise, meaning $\kappa_\SB$ indicates the presence\,/\,absence of a species.
Let $\mathcal{T}_l$ be $\mathcal{T}$ \tsrev{using the information on $\kappa_{\SB}$ instead of $k_\SB$ and using the potentially altered $o_i,y_i$}, with $l$ referring to intervals in the stratigraphic record. 
Let $L_\SB$ be the length of sub-branch $\SB$.

\begin{thm} \label{ThmIntegralkLayers}
Both in the case of potential incomplete sampling and guaranteed complete sampling, the probability density of $\mathcal{T}_l$ is,
\begin{eqnarray*} 
f[\mathcal{T}_l \mid \eta, x_0] &=&  \psi^{-k} f[\mathcal{T} \mid \eta, x_0]  \prod_{\SB: \kappa_{\SB}=1} e^{\psi L_{\SB}} (1-e^{-\psi L_{\SB}}).
\end{eqnarray*}
Note that $k$ is unknown, however $\psi^{-k}$ cancels out with $\psi^{k}$ in function $f[\mathcal{T}]$, meaning $f[\mathcal{T}_l ]$ does not depend on $k$.
\end{thm}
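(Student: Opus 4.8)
The plan is to imitate the proof of Theorem~\ref{ThmIntegralk}, replacing the step ``integrate out the $\kappa$ interior fossils of each stratigraphic range'' by ``integrate out the $k_{\SB}$ fossils of each sub-branch $\SB$ with $\kappa_{\SB}=1$, then sum over $k_{\SB}\ge 1$''. First I would note that $\mathcal{T}$ is recovered from $\mathcal{T}_l$ by specifying, for every sub-branch $\SB$ with $\kappa_{\SB}=1$, the number $k_{\SB}\ge 1$ of fossils it carries together with their positions within the length-$L_{\SB}$ window, while each sub-branch with $\kappa_{\SB}=0$ carries $k_{\SB}=0$ and needs nothing further; hence the joint density of $\mathcal{T}_l$ together with all these $k_{\SB}$ and positions equals $f[\mathcal{T}\mid\eta,x_0]$.

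Next I would write $f[\mathcal{T}\mid\eta,x_0]=\psi^{k}H$ with $k=\sum_{\SB}k_{\SB}$ and $H:=\psi^{-k}f[\mathcal{T}\mid\eta,x_0]$, and argue as in Theorem~\ref{ThmIntegralk} that $H$ is independent of $k$, of the way $k$ is split among the sub-branches, and of the exact fossil positions: by Theorems~\ref{ThmExtendedStickTreeComplete}--\ref{ThmExtendedStickTree}, Corollaries~\ref{CorExtendedStickTreeLabeled}--\ref{CorExtendedStickTreeCollapse}, Theorem~\ref{ThmMixed} and Corollary~\ref{ThmMixedTreeComplete}, each relevant density depends on the fossil data only through the exponent $k$ of $\psi$ and through the (here fixed, possibly approximated) range endpoints $o_i,y_i$; and the collection of sub-branches and their lengths $L_{\SB}$ is determined by those fixed endpoints together with the stratigraphic-interval grid, hence also does not vary. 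In particular a sub-branch carrying a fossil must lie inside a stratigraphic-range branch, so adding or moving fossils on it never changes which case of $\hat{q}$ applies.

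Then I would carry out the integration sub-branch by sub-branch. For a sub-branch $\SB$ with $\kappa_{\SB}=1$, integrating the density over the positions of its $k_{\SB}$ fossils contributes a factor $L_{\SB}^{k_{\SB}}/k_{\SB}!$, exactly the simplex volume appearing in the proof of Theorem~\ref{ThmIntegralk} (cf.\ Eq.~\eqref{MarginalHelp}); pulling the matching $\psi^{k_{\SB}}$ out of $\psi^{k}$ and summing over $k_{\SB}\ge 1$ then gives $\sum_{k_{\SB}\ge 1}(\psi L_{\SB})^{k_{\SB}}/k_{\SB}! = e^{\psi L_{\SB}}-1 = e^{\psi L_{\SB}}\bigl(1-e^{-\psi L_{\SB}}\bigr)$, whereas each $\kappa_{\SB}=0$ sub-branch contributes the empty factor $1$ and drops out of the product. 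Since these operations factorize across sub-branches and $H$ is their common prefactor, $f[\mathcal{T}_l\mid\eta,x_0]=H\prod_{\SB:\kappa_{\SB}=1}e^{\psi L_{\SB}}\bigl(1-e^{-\psi L_{\SB}}\bigr)$; substituting $H=\psi^{-k}f[\mathcal{T}\mid\eta,x_0]$ yields the stated identity, and the $\psi^{-k}$ cancels the $\psi^{k}$ inside $f[\mathcal{T}]$, so the right-hand side does not involve the unknown $k$. The guaranteed-complete-sampling case is identical, with Theorem~\ref{ThmExtendedStickTreeComplete} and Corollary~\ref{ThmMixedTreeComplete} replacing their incomplete-sampling analogues.

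The step I expect to be the real work is the second one: verifying that $H$ is genuinely constant. This amounts to checking that none of the per-branch factors $q$, $\widetilde{q}$, $\widetilde{q}_{asym}$ ever depends on an individual fossil time (only on branch endpoints, which here are the fixed approximated $o_i,y_i$ or speciation times), and being explicit about the approximation that fixes $o_i,y_i$ at interval boundaries or speciation nodes --- without that the range endpoints, and with them $H$ and the $L_{\SB}$, would move as fossils move. Once that is pinned down, the remainder is the same Poisson-process bookkeeping already used for Theorem~\ref{ThmIntegralk}.
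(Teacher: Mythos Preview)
Your proposal is correct and follows essentially the same approach as the paper's own proof: write $f[\mathcal{T}\mid\eta,x_0]=\psi^{k}\bar H$ with $\bar H$ independent of the fossil counts and positions (by the same appeal to the earlier theorems), then integrate and sum sub-branch by sub-branch using the Poisson bookkeeping of Theorem~\ref{ThmIntegralk}, obtaining $e^{\psi L_{\SB}}(1-e^{-\psi L_{\SB}})$ for each $\SB$ with $\kappa_{\SB}=1$. Your explicit remark that the constancy of $H$ relies on the $o_i,y_i$ being fixed (via the approximation or via MCMC integration) is a useful clarification that the paper leaves to the surrounding text.
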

\begin{proof}
The proof is very similar to the proof of Theorem~\ref{ThmIntegralk}.
First, note that $\mathcal{T}$ can be obtained from $\mathcal{T}_l$ by adding the times, $\tau_1,\ldots, \tau_k$, of the $k$ fossils sampled along sub-branches $\SB$ with $\kappa_\SB = 1$. 
Then $f[\mathcal{T}_l, \tau_1,\ldots, \tau_\kappa \mid \eta, x_0] = f[\mathcal{T} \mid \eta, x_0]$ and can be written as:
$$f[\mathcal{T}_l, \tau_1,\ldots, \tau_\kappa \mid \eta, x_0] = \psi^k \tsrev{\bar{H}},$$ with $\tsrev{\bar{H}}:=\psi^{-k}   f[\mathcal{T} \mid \eta, x_0]$. 
Analog to Theorem\,\ref{ThmIntegralk}, it can be shown that $\tsrev{\bar{H}}$ is independent of $k$ and $\tau=(\tau_1,\ldots, \tau_\kappa)$.

Analog to Eqn.\,(\ref{MarginalHelp}), we obtain, with $\tau_\SB$ being the $k_\SB$ fossil sampling times on sub-branch $\SB$,
\begin{eqnarray*} 
f[\mathcal{T}_l, \kappa \mid \eta, x_0]  &=& \tsrev{\bar{H}} \prod_{\SB: k_{\SB}>0} e^{\psi L_\SB} \int_{\tau_\SB} \psi^{k_\SB} e^{-\psi L_\SB} d \tau = \tsrev{\bar{H}} \prod_{\SB: k_{\SB}>0} e^{\psi L_{\SB}} \frac{ (\psi L_{\SB})^{k_{\SB}} e^{-\psi L_{\SB}}} {k_{\SB} !}.
\end{eqnarray*}
Since $\sum_{k_{\SB}=1}^\infty \frac{ (\psi L_{\SB})^{k_{\SB}} e^{-\psi L_{\SB}}} {k_{\SB} !} = 1-e^{-\psi L_{\SB}}$, summing over  $k_{\SB}$ for all $\SB$ leads to,
\begin{eqnarray*} 
f[\mathcal{T}_l \mid \eta, x_0] &=& \tsrev{\bar{H}} \prod_{\SB: \kappa_{\SB}=1} e^{\psi L_{\SB}} (1-e^{-\psi L_{\SB}}).
\end{eqnarray*}
\end{proof}

\begin{rem}
Under asymmetric speciation with guaranteed complete sampling for oriented trees, based on Theorem \ref{ThmExtendedStickTreeComplete} and \ref{ThmIntegralkLayers}, with $L$ being the sum of all branch lengths, we have, analog to Remark \ref{RemAsymMargi},
\begin{eqnarray*} 
f[\mathcal{T}_l \mid \lambda, \psi, x_0] &=&  \lambda^{n-1} e^{-\lambda L} e^{-\psi \sum_{\SB: \kappa_{\SB}=0} L_\SB}  \prod_{\SB: \kappa_{\SB}=1}  (1-e^{-\psi L_{\SB}}).
\end{eqnarray*}
Note that $e^{-\psi \sum_{\SB: \kappa_{\SB}=0} L_\SB}$ is the probability of no fossil samples along sub-branches with $\kappa_{\SB}=0$, and $ \prod_{\SB: \kappa_{\SB}=1}  (1-e^{-\psi L_{\SB}})$ is the probability of at least one fossil sample on each sub-branch with $\kappa_{\SB}=1$.

Under mixed speciation with guaranteed complete sampling for oriented trees, based on Corollary \ref{ThmMixedTreeComplete} and Theorem \ref{ThmIntegralkLayers}, with $L$ being the sum of all branch lengths, we have, analog to Remark \ref{RemMixedMargi},
\begin{multline*}
f[\mathcal{T}_l \mid \lambda, \beta, \lambda_a, \psi, x_0] =  \lambda^{n- j- 1} (1-\beta)^w\beta^{n-j-1-w} 
e^{-\lambda L} e^{-\psi \sum_{\SB: \kappa_{\SB}=0} L_\SB} \\ \prod_{\SB: \kappa_{\SB}=1} \tsrev{\left( e^{-\lambda_a L_\SB} (1-e^{-\psi L_{\SB}})\right)}
\tsrev{\prod_{i \in I} (1-  e^{-\lambda_a(y_{a(i)}-o_i)})}.
\end{multline*}
\end{rem}

\section{Discussion} 
Due to the lack of statistical models combining neontological data (such as molecular sequence data) and paleontological data (such as stratigraphic ranges), these data are typically not analyzed within a single framework. 
Here, we formulate the FBD model under different modes of speciation giving rise to phylogenies and stratigraphic ranges, allowing for incomplete sampling of extinct and extant species. 
We introduce novel macroevolutionary models where we explicitly model the mode of speciation through time in  a phylogenetic context. 
As part of these new models, we derived the probability density, $P[\mathcal{T}_s^o \mid x_{0},  \eta]$ (with $\eta=(\lambda, \beta, \lambda_a, \mu, \psi, \rho)$), 
of a phylogenetic tree (referred to as the sampled tree in the mathematical derivations) on fossil and extant species samples. 
Specifically, several samples may be assigned to a single species, yielding so-called stratigraphic ranges in the phylogenetic tree. Thus, our equations will allow for a coherent and  flexible analysis of paleontological and neontological data.

In particular, we derived the probability density of the phylogenetic tree under asymmetric (budding) speciation in Theorem \ref{ThmStickTree}, and for speciation being either asymmetric (budding), symmetric (bifurcating), or anagenetic, in Theorem \ref{ThmMixed}.
These phylogenetic trees may have sampled-ancestor-stratigraphic ranges, where the  entire stratigraphic range is an ancestor of a descendant sampled species. 
Treatment of sampled ancestors is computationally challenging, requiring novel operators (\textit{i.e.}, proposal mechanisms) in Bayesian analyses \cite{GavryushkinaEtAl2014, HeathEtAl2014, ZhangEtAl2016}.

In the case of asymmetric speciation, the extended stratigraphic range (meaning the species from its first sample to its extinction time) can never be a sampled-ancestor-stratigraphic range, as the extinction event terminates a lineage. 
Thus we explore the extended stratigraphic range further under asymmetric speciation.  Corollary \ref{CorExtendedStickTreeLabeled} states the probability density of the tree connecting all samples, while knowing the extinction times for each sampled species (\textit{i.e.}, the extended sampled tree). 
Corollary \ref{CorExtendedStickTreeCollapse}, taking advantage of the absence of sampled-ancestor-stratigraphic ranges, additionally integrates analytically over all tree topologies. 
Since under symmetric speciation, a speciation event coincides with the extinction of the ancestor species and thus in the extended sampled tree we may also have sampled-ancestor-stratigraphic ranges,  we do not explore the extended sampled tree further under mixed speciation.

We envision that Theorem \ref{ThmStickTree} and Theorem \ref{ThmMixed} will be useful when tree inference is based on molecular and morphological data (such as for mammals). These expressions consider oriented trees rather than labeled trees. Because the analytical solution for the probability density of labeled trees is not possible with our equations, this motivates adapting phylogenetic software to oriented trees in order to use the provided equations.
In the case of asymmetric speciation, if the inferred extinction times for each sampled species are of interest, then  Corollary \ref{CorExtendedStickTreeLabeled} for labeled trees is appropriate. 
Many fossil datasets contain only fossil occurrence times without any morphological or molecular information and the tree topology cannot be inferred, therefore, Corollary \ref{CorExtendedStickTreeCollapse} can be employed for such cases.
Silvestro et al.\ \cite{SilvestroEtAl2014} also considers fossil occurrences, however, the equations assume that at least one fossil per extinct species is sampled, while we allow for non-sampled extinct species. 

For some rare and well studied groups (\textit{e.g.}, terrestrial vertebrates, dinosaurs) we may know the number of specimens collected through time, a number required by the equations above.
In many circumstances, however, we  only have information about the first and last occurrence times of a species, but not necessarily how many fossils were sampled in between (\textit{e.g.}, many marine invertebrates). 
Thus we further provide Theorem \ref{ThmIntegralk} to integrate over the number of fossils within a stratigraphic range for any of the settings mentioned above.
In other circumstances, we may only have information about the presence or absence of a fossil species within a given stratigraphic  interval or layer, but not the total number of specimens of a particular species sampled within each interval. We take the presence\,/\,absence data into account in Theorem \ref{ThmIntegralkLayers}.

We focussed on a thorough mathematical treatment of the FBD  model under different modes of speciation in this paper. 
The results, namely the probability of a tree, $P[\mathcal{T} \mid x_{0},  \eta]$ with $\eta$ being the FBD model parameters $\eta=(\lambda, \beta, \lambda_a, \mu, \psi, \rho)$, will be crucial for inferring posterior distributions of trees and model parameters (such as speciation and extinction rates) based on molecular and morphological data from extant and fossil species, or fossil occurrence data. 
Denoting all those data with $data$, 
and summarizing all parameters from models of evolution for the molecular and morphological data with $\theta$, a Bayesian method aims to infer,
$$P[\mathcal{T},  x_{0},\eta, \theta \mid data] = P[data \mid \mathcal{T},\theta]  P[\mathcal{T} \mid  x_{0}, \eta] P[\eta, x_{0},\theta] / P[data].$$
Thus, with the FBD model under different modes of speciation implemented as prior densities in Bayesian inference tools, we can readily infer trees and parameters from both paleontological and neontological data. 
Bayesian inference under these models may also allow us to assess the common modes of speciation by estimating their rate parameters $\lambda, \beta, \lambda_a$. We use the word ``may'' in the previous sentence as it is not clear if all parameters can  be identified based on the available data. While we know that we can infer $\lambda$ from enough neontological and paleontological data, it will be exciting to explore \tsrev{the extent to which} we can estimate further details of the speciation process from these data, \textit{i.e.}, estimate $\beta$ and $\lambda_a$.
Further, our mathematical results open the door to performing species-tree/gene-tree inference incorporating several fossils through time from the same species by using the probability density from  Theorem \ref{ThmStickTree} and Theorem \ref{ThmMixed}  as a species-tree prior.

We explicitly model the mode of speciation  within a phylogenetic framework. Following the paleontological literature, we model asymmetric (budding), symmetric (bifurcating), and anagenetic speciation \cite{Foote1996-fn}. 
A branching event in the phylogeny gives rise to either an asymmetric or a symmetric speciation event. 
Thus, these two branching speciation modes reflect the divergence of populations. 
In particular, these two modes of speciation do not require statements about the morphological change along these lineages;
in fact, divergence may be  driven by molecular rather than morphological change as observed in cryptic species. 
While anagenetic speciation may also be driven by molecular or morphological change, this speciation mode can typically only be identified if morphological change has occurred along a lineage to distinguish younger members from earlier ancestral forms, and if the fossil record of a given group has been sufficiently densely sampled to observe this morphological change directly (e.g.~planktic forams).
Thus, from a phylogenetic perspective, one might argue that we only want to model speciation processes that do not rely on morphological change, \textit{i.e.}, we only model branching speciation and set $\lambda_a=0$. 
This, however, would require us to associate uncertainty with the stratigraphic range data in the sense of allowing for the possibility  that different stratigraphic ranges actually belong to the same species despite being morphologically  distinct.
In addition, this would also require us to allow for the possibility that a single stratigraphic range actually  represents multiple morphologically very similar species.  We leave it for future work to extend our model such that uncertainty in stratigraphic range data can be considered.

The asymmetric speciation scenario, and in particular Theorem \ref{ThmStickTree}, can also be useful in epidemiology for modeling transmission trees. 
For some patients from whom we take a pathogen sequence at time $s_i$, we may know that they have already been infected at time $o_i$. 
We may also know that they are
still infected at some time $y_i$ more recently than $s_i$. 
We assume for patient $i$ the ``stratigraphic'' range $(o_i,y_i)$ and
obviously, $y_i=s_i$ and/or $o_i=s_i$ is possible.  
Theorem \ref{ThmStickTree} provides the probability density of a sampled tree (\textit{i.e.}, sampled transmission tree). 
Furthermore, when applying the species-tree/gene-tree framework to pathogens, yielding to a transmission-tree/gene-tree framework, we can incorporate multiple sequences per patient to infer transmission trees.
We note that an oriented sampled transmission tree provides us with ancestor-descendant relationships between patients, however, an ancestor may not be the direct donor due to unsampled intermediate patients, unless we can assume guaranteed complete sampling. In the case of guaranteed complete sampling, our method may be considered as an alternative to classic methods on transmission tree reconstruction from genetic data (see \textit{e.g.}, \cite{Jombart2011Heredity, Hall2015PLoSCompBiol}). In the case of potential incomplete sampling,  Didelot et al.~\cite{Didelot2017MBE} recently  proposed a method for inferring transmission trees. Compared to our method, their method can provide donor-recipient pairs. However, their approach cannot integrate over unobserved patients analytically, but requires data augmentation. The latter can be very slow in the case of many unobserved patients.

In summary, more explicit treatment of paleontological and neontological data in a phylogenetic framework, as presented here, has the potential to yield more robust and accurate inferences of macroevolutionary parameters, such as phylogenetic relationships, divergence times, rates of diversification, and rates of fossil recovery. 
Furthermore, our mathematical results also offer potentially promising approaches for detailed analysis of pathogen sequence data from an epidemic.
We end by highlighting that our approaches focus on processes inducing trees via processes such as speciation, extinction, transmission, and recovery. For the future, it will be a great challenge to further incorporate reticulation processes such as hybridization, horizontal gene transfer, and recombination. 

\section{Acknowledgements}
We thank Walker Pett for pointing out a simplification of the expression for $\widetilde{q}_{asym}(t)$, and Daniele Silvestro for discussions on the mixed speciation model. \tsrev{We thank the editor and two anonymous reviewers for very constructive and helpful comments towards improving our manuscript.}

TS is supported in part by the European Research Council under the Seventh Framework Programme of the European Commission (PhyPD: grant agreement number 335529).
RCMW is supported by the ETH Z\"{u}rich Postdoctoral Fellowship and Marie Curie Actions for People COFUND program.
TAH is supported by National Science Foundation (USA) grants: DEB-1556615 and DEB-1556853. 
AJD is supported by the Marsden Fund, contract number UOA1611.

\bibliographystyle{abbrv}

\begin{thebibliography}{10}

\bibitem{Bapst2013}
D.~W. Bapst.
\newblock A stochastic rate-calibrated method for time-scaling phylogenies of
  fossil taxa.
\newblock {\em Methods in Ecology and Evolution}, 4(8):724--733, 2013.

\bibitem{beast2}
R.~Bouckaert, J.~Heled, D.~K{\"u}hnert, T.~Vaughan, C.-H. Wu, D.~Xie, M.~A.
  Suchard, A.~Rambaut, and A.~J. Drummond.
\newblock {BEAST} 2: a software platform for {B}ayesian evolutionary analysis.
\newblock {\em PLoS Comput Biol}, 10(4):e1003537, 2014.

\bibitem{Didelot2017MBE}
X.~Didelot, C.~Fraser, J.~Gardy, and C.~Colijn.
\newblock Genomic infectious disease epidemiology in partially sampled and
  ongoing outbreaks.
\newblock {\em Molecular biology and evolution}, 34(4):997--1007, 2017.

\bibitem{DidierEtAl2017}
G.~Didier, M.~Fau, and M.~Laurin.
\newblock Likelihood of tree topologies with fossils and diversification rate
  estimation.
\newblock {\em Systematic Biology}, 2017.

\bibitem{DidierEtAl2012JTB}
G.~Didier, M.~Royer-Carenzi, and M.~Laurin.
\newblock The reconstructed evolutionary process with the fossil record.
\newblock {\em Journal of Theoretical Biology}, 315:26--37, 2012.

\bibitem{Foote1996-fn}
M.~Foote.
\newblock On the probability of ancestors in the fossil record.
\newblock {\em Paleobiology}, 22(02):141--151, 1996.

\bibitem{FordEtAl2009}
D.~Ford, E.~Matsen, and T.~Stadler.
\newblock A method for investigating relative timing information on
  phylogenetic trees.
\newblock {\em Systematic Biology}, 58(2):167--183, 2009.

\bibitem{GavryushkinaEtAl2017}
A.~Gavryushkina, T.~A. Heath, D.~T. Ksepka, T.~Stadler, D.~Welch, and A.~J.
  Drummond.
\newblock Bayesian total-evidence dating reveals the recent crown radiation of
  penguins.
\newblock {\em Systematic Biology}, 66(1):57, 2017.

\bibitem{GavryushkinaEtAl2014}
A.~Gavryushkina, D.~Welch, T.~Stadler, and A.~J. Drummond.
\newblock Bayesian inference of sampled ancestor trees for epidemiology and
  fossil calibration.
\newblock {\em PLoS Computational Biology}, 10(12):e1003919, 2014.

\bibitem{Hall2015PLoSCompBiol}
M.~Hall, M.~Woolhouse, and A.~Rambaut.
\newblock Epidemic reconstruction in a phylogenetics framework: transmission
  trees as partitions of the node set.
\newblock {\em PLoS Computational Biology}, 11(12):e1004613, 2015.

\bibitem{HeathEtAl2014}
T.~A. Heath, J.~P. Huelsenbeck, and T.~Stadler.
\newblock The fossilized birth--death process for coherent calibration of
  divergence-time estimates.
\newblock {\em Proceedings of the National Academy of Sciences},
  111(29):E2957--E2966, 2014.

\bibitem{Hohna2016RevBayes}
S.~H\"{o}hna, M.~J. Landis, T.~A. Heath, B.~Boussau, N.~Lartillot, B.~R. Moore,
  J.~P. Huelsenbeck, and F.~Ronquist.
\newblock {RevBayes}: {B}ayesian phylogenetic inference using graphical models
  and an interactive model-specification language.
\newblock {\em Systematic Biology}, 65(4):726--736, 2016.

\bibitem{mrbayes}
J.~P. Huelsenbeck, F.~Ronquist, et~al.
\newblock {MrBayes}: Bayesian inference of phylogenetic trees.
\newblock {\em Bioinformatics}, 17(8):754--755, 2001.

\bibitem{HuntSlater2016}
G.~Hunt and G.~J. Slater.
\newblock Integrating paleontological and phylogenetic approaches to
  macroevolution.
\newblock {\em Annual Review of Ecology, Evolution, and Systematics},
  47:189--213, 2016.

\bibitem{Jombart2011Heredity}
T.~Jombart, R.~Eggo, P.~Dodd, and F.~Balloux.
\newblock Reconstructing disease outbreaks from genetic data: a graph approach.
\newblock {\em Heredity}, 106(2):383, 2011.

\bibitem{Oliver1996morphospp}
I.~Oliver and A.~J. Beattie.
\newblock Invertebrate morphospecies as surrogates for species: a case study.
\newblock {\em Conservation Biology}, 10(1):99--109, 1996.

\bibitem{QuentalMarshall2010}
T.~B. Quental and C.~R. Marshall.
\newblock Diversity dynamics: molecular phylogenies need the fossil record.
\newblock {\em Trends in Ecology and Evolution}, 25:434--441, 2010.

\bibitem{Ronquist2012}
F.~Ronquist, S.~Klopfstein, L.~Vilhelmsen, S.~Schulmeister, D.~L. Murray, and
  A.~P. Rasnitsyn.
\newblock A total-evidence approach to dating with fossils, applied to the
  early radiation of the hymenoptera.
\newblock {\em Systematic Biology}, 61(6):973--999, 2012.

\bibitem{RonquistMrBayes2012}
F.~Ronquist, M.~Teslenko, P.~van~der Mark, D.~L. Ayres, A.~Darling, S.~H\"ohna,
  B.~Larget, L.~Liu, M.~A. Suchard, and J.~P. Huelsenbeck.
\newblock {MrBayes} 3.2: Efficient {B}ayesian phylogenetic inference and model
  choice across a large model space.
\newblock {\em Systematic Biology}, 61(3):539--542, 2012.

\bibitem{SilvestroEtAl2014}
D.~Silvestro, J.~Schnitzler, L.~H. Liow, A.~Antonelli, and N.~Salamin.
\newblock Bayesian estimation of speciation and extinction from incomplete
  fossil occurrence data.
\newblock {\em Systematic Biology}, 63(3):349--367, 2014.

\bibitem{SlaterHarmon2013}
G.~J. Slater and L.~J. Harmon.
\newblock Unifying fossils and phylogenies for comparative analyses of
  diversification and trait evolution.
\newblock {\em Annual Review of Ecology, Evolution, and Systematics},
  4:699--702, 2013.

\bibitem{Stadler2010JTB}
T.~Stadler.
\newblock Sampling-through-time in birth-death trees.
\newblock {\em Journal of Theoretical Biology}, 267(3):396--404, 2010.

\bibitem{Wagner1995}
P.~J. Wagner.
\newblock Stratigraphic tests of cladistic hypotheses.
\newblock {\em Paleobiology}, 21:153--178, 1995.

\bibitem{ZhangEtAl2016}
C.~Zhang, T.~Stadler, S.~Klopfstein, T.~A. Heath, and F.~Ronquist.
\newblock Total-evidence dating under the fossilized birth--death process.
\newblock {\em Systematic Biology}, 65(2):228--249, 2016.

\end{thebibliography}

\end{document}